\newtheorem{xdefinition}{Definition}
\newtheorem{xobservation}{Observation}
\newtheorem{xtheorem}{Theorem}
\newtheorem{xmaintheorem}{Main Theorem}
\newtheorem{xlemma}{Lemma}
\newtheorem{xproposition}{Proposition}
\newtheorem{xcorollary}{Corollary}
\newtheorem{xclaim}{Claim}
\newtheorem{xproperty}{Property}
\newenvironment{definition}{\begin{xdefinition}\rm}%
{\hspace*{\fill}\raisebox{-1pt}{\boldmath$\Box$}\end{xdefinition}}
{\hspace*{\fill}\raisebox{-1pt}{\boldmath$\Box$}\end{xobservation}}
\newenvironment{theorem}{\begin{xtheorem}\rm}{\end{xtheorem}}
\newenvironment{lemma}{\begin{xlemma}\rm}{\end{xlemma}}
\newenvironment{proposition}{\begin{xproposition}\rm}{\end{xproposition}}
\newenvironment{corollary}{\begin{xcorollary}\rm}{\end{xcorollary}}
\newenvironment{property}{\begin{xproperty}}{\end{xproperty}}
\newenvironment{proof}{\begin{trivlist}\item[]{\bf Proof }}%
{\hspace*{\fill}\raisebox{-1pt}{\boldmath$\Box$}\end{trivlist}}
\newcommand{\G}{\ensuremath{\mathcal{G}}}
\newcommand{\I}{\ensuremath{\mathcal{I}}}
\newcommand{\ALG}[1]{{\ensuremath{\mathbb{#1}}}\xspace}
\newcommand{\LRU}{\ensuremath{\mathrm{LRU}}\xspace}
\newcommand{\FIFO}{\ensuremath{\mathrm{FIFO}}\xspace}
\newcommand{\FWF}{\ensuremath{\mathrm{FWF}}\xspace}
\newcommand{\LFD}{\ensuremath{\mathrm{LFD}}\xspace}
\newcommand{\SET}[1]{\{ #1 \}}
\newcommand{\SETOF}[2]{\{#1\:|\:#2\}}
\newcommand{\BIGSETOF}[2]{\left\{ #1\:|\:#2 \right\}}
\newcommand{\WEHAVE}{\!:\;}
\newcommand{\SEQ}[1]{\langle #1 \rangle}
\newcommand{\WORST}[3]{\operatorname{Worst}(#1,#2,#3)}
\newcommand{\RWOR}[2]{\ensuremath{\mathrm{WR}_{#1,#2}}\xspace}
\begin{document}

\title{Access Graphs Results for LRU versus FIFO  \\
       under Relative Worst Order Analysis\thanks{A
preliminary version of this paper will appear in the
proceedings of the
Thirteenth Scandinavian Symposium and Workshops on Algorithm Theory.
Partially
supported by the Danish Council for Independent Research.} }

\author{Joan Boyar \hspace{2em} Sushmita Gupta  \hspace{2em} Kim S. Larsen \\[1ex]
        University of Southern Denmark \\
        Odense, Denmark \\[1ex]
        {\tt \{joan,sgupta,kslarsen\}@imada.sdu.dk}}

\date{}
\maketitle

\begin{abstract}
Access graphs, which have been used previously in connection with competitive
analysis to model locality of reference in paging, are considered in
connection with relative worst order analysis. 
In this model, \FWF is shown
to be strictly worse than both \LRU and \FIFO on any access graph.
\LRU is shown to be strictly better than \FIFO on paths and cycles, but
they are incomparable on some families of graphs which grow with the
length of the sequences.
\end{abstract}

\section{Introduction}

The term {\em online} algorithm~\cite{BE97b} is used for an algorithm that receives its input as a sequence of items, one at a time, and
for every item, before knowing the subsequent items, must make an irrevocable decision regarding how to process the current item.

The most standard measure of quality of an online algorithm
is {\em competitive analysis}~\cite{G66j,ST85j,KMRS88j}.
This is basically the worst case ratio between the performance
of the online algorithm compared to an optimal offline
algorithm which
is allowed to know the entire input sequence before processing
it and is assumed to have unlimited computational power.

Though this measure is very useful and has driven a lot of research,
researchers also observed problems~\cite{ST85j}
with this measure from the very beginning:
many algorithms
obtain the same (poor) ratio, while showing quite different
behavior in practice.

The {\em paging problem} is one of the prime examples of these difficulties.
The paging problem is the problem of maintaining a subset
of a potentially very large number of pages in a much smaller,
faster cache with space for a limited set of $k$ pages.
Whenever a page is requested, it must be brought into cache if it is
not already there. In order to make room for such a page, another page
currently in cache must be evicted. Therefore, an online algorithm for
this problem is often referred to as an eviction strategy.

For a number of years, researchers have worked on refinements
or additions to competitive analysis with the aim of obtaining
separations between different algorithms for solving an
online problem.
Some of the most obvious and well-known
paging algorithms are the eviction strategies \LRU
(Least-Recently-Used) and \FIFO (First-In/First-Out).
One particularly notable result has been the separation of
\LRU and \FIFO via {\em access graphs}.
Access graphs were introduced in~\cite{BIRS95} with the aim
of modelling the locality of reference that is often seen
in real-life paging situations~\cite{D68,D80}.
An access graph is an undirected graph
with all pages in slow memory as vertices.
Given such a graph, one then restricts the analysis of
the performance of an algorithm to sequences
respecting the graph, in the sense that any two distinct, consecutive
requests must be neighbors in the graph.
Important results in understanding why \LRU is often observed to
perform better than \FIFO in practice
were obtained in~\cite{BIRS95,CN99},
showing that on some access graphs, \LRU is strictly better than
\FIFO, and on no access graph is it worse;
all these previous results are with respect to competitive analysis.

More recently, researchers have made attempts to introduce new
generally-applicable performance measures
and to apply measures defined to solve one particular problem
more generally to other online problems.
A collection of alternative performance measures is
surveyed in~\cite{DLO05j}.
Of the alternatives to competitive analysis,
{\em relative worst order analysis}~\cite{BF07,BFL07j} and
{\em extra resource analysis}~\cite{KP00}
are the ones that
have been successfully applied to most different online problems.
See~\cite{EKL11p} for an example list of online problems and references to
relative worst order analysis results resolving various issues
that are problematic with regards to competitive analysis.

Paging has been investigated
under relative worst order analysis in~\cite{BFL07j}.
Some separations were found, but \LRU and \FIFO were proven
equivalent, possibly because locality of reference is necessary to
separate these two paging algorithms.
In this paper, we apply the access graph technique to
relative worst order analysis.
Note that the unrestricted analysis in~\cite{BFL07j} corresponds
to considering a complete access graph.

Overall, our contributions are the following.
Using relative worst order analysis,
we confirm the competitive analysis result~\cite{BIRS95}
that \LRU is better than \FIFO for path access graphs. Since these
two quality measures are so different, this is a
a strong indicator of the robustness of the result.
Then we analyze cycle access graphs, and show that with regards to
relative worst order analysis, \LRU is strictly better than \FIFO.
Note that this does not hold under competitive analysis.
The main technical contribution is the proof showing that on cycles,
with regards to relative worst order analysis, \FIFO is never
better than \LRU.
Clearly, paths and cycles are the two most fundamental building blocks,
and future detailed analyses of any other graphs type will likely
build on these results.
In addition, when the cache size is small compared with the size of the access
graph, localized behavior in time is likely to be that of paths and cycles.

The standard example of a very 
bad algorithm with the same competitive ratio as \LRU and \FIFO is
\FWF, which is shown to be strictly worse than both \LRU and \FIFO on
any access graph (containing a path of length at least $k+1$),
according to relative worst order analysis. 

Using relative worst order analysis, one can often obtain more
nuanced results. This is also the case here for general access graphs,
where we establish an incomparability result.

None of the algorithms we consider require prior knowledge of the
underlying access graph. This issue was pointed out in \cite{FM97}
and \cite{FR97} in connection with the 
limitations of some of the access graph results given
in \cite{BIRS95,FK95,IKP96j} and the Markov paging analogs in \cite{KPR00j}.

As relative worst order analysis is getting more established
as a method for analyzing online algorithms in general,
it is getting increasingly important that the theoretical toolbox
is extended to match the options available when carrying out competitive
analysis. Recently, in~\cite{EKL11p}, list factoring~\cite{AvSW95,BM85} was
added as an analytical tool when using relative worst order analysis
on list accessing problems~\cite{ST85j,AW98}, and here we demonstrate
that access graphs can be included as another useful technique.

After a preliminary section, where we define all concepts, including
relative worst order analysis, we prove that \LRU is never worse
than \FIFO  on paths or cycles. Then we establish separation results,
showing that \LRU is strictly better than
\FIFO on paths and cycles of length at least $k+1$ and that both
algorithms are strictly better than \FWF on any graph containing a
path of length at least $k+1$.
The last result proves the incomparability of \LRU and \FIFO
on general access graphs, using a family of graphs where the size is
proportional to the length of the request sequence. We conclude with some
open problems regarding determining completely for which classes of graphs
\LRU is better than \FIFO.

\section{Preliminaries}
\label{section-preliminaries}

The {\em paging problem} is the problem of processing a sequence
of page requests with the aim of minimizing the number of page faults.
Pages reside in a large memory of size $N$, but whenever a page
is requested, it must also be in the smaller cache of size $k<N$.
If it is already present, we refer to this as a {\em hit}.
Otherwise, we have a {\em fault} and must bring the page into cache.
Except for start-up situations with a cache that is not full,
this implies that some page currently in cache must be chosen to be evicted
by a paging algorithm.

If \ALG{A} is a paging algorithm and $I$ an input sequence,
we let $\ALG{A}(I)$ denote the number of faults that \ALG{A} incurs on $I$.
This is also referred to as the {\em cost} of \ALG{A} on $I$.

An important property of some paging algorithms that is used 
several times in this paper is the following:

\begin{definition}
An online paging algorithm is called {\em conservative}
if it incurs at most $k$ page faults on any consecutive subsequence
of the input containing $k$ or fewer distinct page references.
\end{definition}

The algorithms, Least-Recently-Used (\LRU) and First-In/First-Out
 (\FIFO) are examples of conservative algorithms.
On a page fault, \LRU evicts the least recently used page in cache and
 \FIFO evicts the page which has been in cache the longest.
Flush-When-Full (\FWF), which is not conservative, is the algorithm
 which evicts all pages in cache whenever there is
a page fault and its cache is full.

Longest-Forward-Distance (\LFD), which is not online,
evicts the page whose next request is the latest.
If there is more than one page which is never requested again,
then any of those pages can be evicted,
and all of these versions of \LFD are optimal~\cite{Bel66}.

An input sequence of page requests is denoted
$I=\SEQ{r_1, r_2, \ldots, r_{|I|}}$.
We use standard mathematical interval notation to denote subsequences.
They can be open, closed, or semi-open, and are denoted by 
$(r_a, r_b)$, $[r_a, r_b]$, $(r_a, r_b]$, or $[r_a, r_b)$.
If $S$ is a set of pages, we call a request interval $S$-{\em free}
if the interval does not contain requests to any elements of $S$.

We use the following notation for graphs.
\begin{definition}
The path graph on $N$ vertices is denoted $P_N$
and a cycle graph on $N$ vertices is denoted $C_N$.
A \textit{walk} is an ordered sequence of vertices
where consecutive vertices are either identical or
adjacent in the graph.
A \textit{path} is 
a walk in which every vertex appears at most once.
The length of a walk ${\cal W}$ is the number of (not necessarily distinct) 
vertices in it, denoted by $|{\cal W}|$.
The set of distinct vertices in a walk ${\cal W}$ is denoted by $\SET{{\cal W}}$.
\end{definition}

\begin{definition}
An {\em access graph} $G=(V,E)$ is a graph whose vertex set corresponds
to the set of pages that can be requested in a sequence. A sequence is
said to {\em respect} an access graph,
if the sequence of requests constitutes a walk in that access graph. 
\end{definition}

In the relative worst order analyses carried out in this paper,
permutations play a key role. We introduce some notation for this
and then present the standard definition of the relative worst order 
quality measure.

For an algorithm $\ALG{A}$, $\ALG{A}_W(I)$ is the cost of the algorithm $\ALG{A}$
on the worst reordering of the input sequence $I$, i.e., $\ALG{A}_W(I) = \max_{\sigma} \ALG{A}(\sigma(I))$, 
where $\sigma$ is a permutation on $|I|$ elements
and $\sigma(I)$ is a reordering of the sequence $I$.

\begin{definition}
For any pair of paging algorithms $\ALG{A}$ and $\ALG{B}$, we define 
\begin{align*}
c_l(\ALG{A}, \ALG{B}) \; = \; &
  \sup \SET{c \mid \exists b\WEHAVE \forall I\WEHAVE \ALG{A}_W(I) \geq c\, \ALG{B}_W(I) -b}
\textrm{ and }\\
 c_u(\ALG{A}, \ALG{B}) \; = \; &
  \inf \SET{c \mid \exists b\WEHAVE \forall I\WEHAVE \ALG{A}_W(I) \leq c\, \ALG{B}_W(I) +b}.
\end{align*}

If $c_l(\ALG{A},\ALG{B}) \geq 1 $ or $c_u(\ALG{A},\ALG{B}) \leq 1$,
the algorithms are said to be {\em comparable}
and the {\em relative worst order ratio} $\RWOR{\ALG{A}}{\ALG{B}}$ of 
algorithm $\ALG{A}$ to $\ALG{B}$ is defined. Otherwise, $\RWOR{\ALG{A}}{\ALG{B}}$ is undefined.     
 \begin{align*}
 &\textrm{If } c_l(\ALG{A}, \ALG{B}) \geq 1, \textrm{ then } \RWOR{\ALG{A}}{\ALG{B}} = c_u(\ALG{A},\ALG{B})
 \textrm{ and} \\
 &\textrm{if } c_u(\ALG{A}, \ALG{B}) \leq 1, \textrm{ then } \RWOR{\ALG{A}}{\ALG{B}} = c_l(\ALG{A},\ALG{B}). 
 \end{align*}

If $\RWOR{\ALG{A}}{\ALG{B}} < 1$, algorithms $\ALG{A}$ and $\ALG{B}$ are said to be comparable
in $\ALG{A}$'s {\em favor}. Similarly, if $\RWOR{\ALG{A}}{\ALG{B}} > 1$,
the algorithms are said to be comparable in $\ALG{B}$'s {\em favor}.
\end{definition}

When we use this measure to compare algorithms on a given
access graph $G$, we use the notation
$\ALG{A}_W^G(I)$ to denote the cost of $\ALG{A}$ on a worst permutation
of $I$ that respects $G$.
Similarly,
we use $\RWOR{\ALG{A}}{\ALG{B}}^G$ to denote the relative worst order ratio
of algorithms $\ALG{A}$ and $\ALG{B}$ on the access graph $G$. 

Finally,
let $\WORST{I}{G}{\ALG{A}}$ denote the set of worst orderings
for the algorithm $\ALG{A}$ of $I$ respecting the access graph $G$,
i.e., any sequence in $\WORST{I}{G}{\ALG{A}}$ is a permutation of $I$,
they all respect $G$, and %
for any $I\in\WORST{I}{G}{\ALG{A}}$, $\ALG{A}(I)=\ALG{A}_W^G(I)$.

\section{Paths}

In~\cite[Theorem~13]{BIRS95}, it has been shown that if the access graph
is a tree, then \LRU is optimal among all online algorithms.
In the case of path graphs, though, \LRU matches the performance of an
optimal offline algorithm. For completeness, we provide our own direct proof. %

\begin{theorem}
\label{lru-opt}
On a path access graph, \LRU's performance is optimal.
\end{theorem}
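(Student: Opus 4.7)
The approach is to show that \LRU's cache coincides, step by step, with the cache of some implementation of Belady's \LFD throughout the processing of any walk on a path. Since \LFD is optimal by Belady's theorem, optimality of \LRU follows immediately.

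The first step is a structural lemma: on a walk respecting $P_N$, the set of the (up to) $k$ most recently visited distinct pages forms a contiguous interval on the path, and hence \LRU's cache is always of the form $[a,b]$. The reason is that if $u<w$ are both among the last $k$ distinct requests, then for any $v$ with $u<v<w$, the walk segment connecting the most recent visits to $u$ and $w$ must cross $v$, so $v$'s last-visit time is at least as recent as the older of the two, forcing $v$ into the cache as well.

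Now consider an \LRU fault at time $t$ on $r_t=p$ with cache $[a,b]$. Walk adjacency forces $p=b+1$ or $p=a-1$; take the former (the other case is symmetric), so the walker is at $b$ just before time $t$ and \LRU evicts the far endpoint $a$. The plan is to argue this is a valid \LFD choice. Letting $n_i$ denote the next-request time of page $i$ after $t$ (with $n_i=\infty$ if $i$ is never visited again), reaching $i\in[a,b]$ from $b+1$ requires the walk to first traverse $b,b-1,\ldots,i+1$. Hence $n_a\geq n_{a+1}\geq\cdots\geq n_b$, and because distinct pages are requested at distinct times, these inequalities are strict as long as the values are finite. So the ties occur only in a ``never revisited'' tail $[a,L-1]$ where $L$ is the leftmost vertex the walker ever returns to, and every such tied page (in particular $a$) is a valid \LFD eviction.

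Finally, by induction on $t$, one maintains the invariant that \LRU's cache equals the cache of a fixed \LFD implementation that breaks ties by evicting the page farthest along the path from the incoming request. This yields identical fault counts, giving optimality of \LRU. The main subtle point I anticipate is the tie-breaking: verifying that whenever multiple pages share $n_i=\infty$, they all lie on the opposite side of the interval from the new request, so that \LRU's specific choice of the far endpoint is \LFD-legal; the interval analysis above is what makes this work.
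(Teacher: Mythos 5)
Your proof is correct, but it takes a genuinely different route from the one in the paper. The paper argues by contradiction: it fixes a version of \LFD that copies \LRU whenever \LRU evicts a never-again-requested page, looks at the \emph{first} request where the two algorithms evict different pages $q$ and $\hat{q}$ to admit $p$, and derives from the $\SET{q,\hat{q}}$-free intervals that none of $p$, $q$, $\hat{q}$ lies on the path between the other two --- impossible on a path graph. You instead run a direct simulation: you first establish the invariant that \LRU's cache on a path is always a contiguous interval $[a,b]$ (via the observation that any walk between two visited vertices must cross every vertex between them), then show that on a fault at $b+1$ the next-request times satisfy $n_a \geq n_{a+1} \geq \cdots \geq n_b$ with ties only among never-revisited pages, so evicting the far endpoint $a$ is always \LFD-legal; an induction then keeps \LRU's cache identical to that of a fixed tie-breaking implementation of \LFD. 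Both arguments ultimately rest on the same two facts (Belady's optimality of every tie-breaking version of \LFD, and the crossing property of walks on a path), but yours buys more: an explicit characterization of \LRU's cache as an interval and a step-by-step identification with an optimal schedule, whereas the paper's contradiction argument is shorter and needs no global invariant. One small point to make explicit in a final write-up: the assertion that \LRU evicts the \emph{far} endpoint $a$ (i.e., that last-use times increase monotonically from $a$ to $b$ when the walker sits at $b$) needs the time-reversed version of your crossing argument; it is the same one-line argument you already use for the interval lemma and for the monotonicity of the $n_i$, but it should be stated.
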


\begin{proof}
We compare the behavior of \LRU to that of \LFD
on a sequence respecting a path access graph.

When more than one of the pages in cache will not be requested again,
\LFD can arbitrarily choose to evict any of these pages
when bringing a new page into cache.
Without loss of generality, we assume that we compare \LRU to
a version of \LFD that, if \LRU evicts a page which is never
requested again, evicts the same page as \LRU.

Assume to the contrary that there exists a sequence
$I=\SEQ{r_1, \ldots, r_n}$
for which \LFD does strictly better than \LRU.
Both algorithms start with an empty cache and
until the cache is full, they behave identically.
Let $r_i$ be the first request where the algorithms behave differently,
i.e., to bring in the new page, they evict different pages from their caches.

We denote the page requested at $r_i$ by $p$,
and the pages evicted by \LRU and \LFD by $q$ and $\hat{q}$,
respectively.
If neither $q$ nor $\hat{q}$ are requested again, by the assumption
of \LFD version above, \LRU and \LFD should have evicted the same page.
Thus, we may assume that $q$ is requested again after $r_i$.
Since \LRU does not evict $\hat{q}$,
$\hat{q}$ must have been requested more recently than $q$.
Let $r_a$ and $r_b$ denote the last requests before $r_i$
for $q$ and $\hat{q}$, respectively.
It follows from \LFD's eviction strategy
that unless $\hat{q}$ is never requested again,
the first request for $q$ after $r_i$ must be
before the first request for $\hat{q}$ after $r_i$.

By definition of $q$ and $\hat{q}$,
the intervals $(r_a, r)$ and $(r_b, r)$ are $\SET{q}$-free
and $\SET{q,\hat{q}}$-free, respectively.
The request sequence must have the following structure.
\[\ldots\ldots r_a = q \ldots\ldots r_b = \hat{q}
 \underbrace{\ldots\ldots}_{\SET{q,\hat{q}}-\textrm{free}} r_i = p 
 \underbrace{\ldots\ldots}_{\SET{q,\hat{q}}-\textrm{free}} r_c = q 
 \ldots\ldots
\]
It is easy to see that $p$ does not lie on the path $(q, \hat{q})$, since
otherwise $p$ would be requested in $(r_a, r_b)$
and therefore should not be evicted by 
\LRU before evicting $q$ at $r_i$.
Due to the subwalks that are $\SET{q,\hat{q}}$-free,
there is a path from $p$ to $q$ which does not pass through $\hat{q}$, as 
well as a path from $p$ to $\hat{q}$ which does not pass through $q$.

Thus, for the three vertices $p$, $q$, and $\hat{q}$ in the access graph,
we have argued
that none of them are on the path between the two others.
This implies that the access graph is not a path, and we have
reached a contradiction.
\end{proof}

\begin{theorem}
\label{paths-leq}
For all sequences $I$ respecting the access graph $P_N$,
\[\LRU^{P_N}_W(I) \leq \FIFO^{P_N}_W(I).\]
\end{theorem}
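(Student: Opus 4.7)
The plan is to reduce this to Theorem \ref{lru-opt} almost immediately. Since \LRU is optimal on any sequence respecting a path access graph, we have $\LRU(J) \leq \FIFO(J)$ for every such sequence $J$. The worst-order quantities are defined as a maximum over valid permutations, so we just need to transfer this pointwise inequality to the suprema.

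Concretely, I would proceed as follows. First, pick a worst reordering for \LRU: let $\sigma^{*}(I) \in \WORST{I}{P_N}{\LRU}$, so that $\LRU^{P_N}_W(I) = \LRU(\sigma^{*}(I))$. Note that $\sigma^{*}(I)$ is a permutation of $I$ that respects $P_N$. Second, apply Theorem \ref{lru-opt} to the sequence $\sigma^{*}(I)$: since \LRU is optimal among all paging algorithms on sequences respecting $P_N$, we get $\LRU(\sigma^{*}(I)) \leq \FIFO(\sigma^{*}(I))$. Third, since $\sigma^{*}(I)$ is a valid reordering of $I$ respecting $P_N$, by definition of $\FIFO^{P_N}_W$ as a maximum over all such reorderings, $\FIFO(\sigma^{*}(I)) \leq \FIFO^{P_N}_W(I)$. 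Chaining these three relations yields the desired inequality $\LRU^{P_N}_W(I) \leq \FIFO^{P_N}_W(I)$.

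There is no real technical obstacle here; the only thing to be careful about is that the worst permutation used to realize $\LRU^{P_N}_W(I)$ need not be the one realizing $\FIFO^{P_N}_W(I)$, but this is exactly what the final $\leq$ step absorbs. The crucial ingredient is Theorem \ref{lru-opt}, which gives the pointwise dominance \LRU $\leq$ \FIFO on every $P_N$-respecting sequence; everything else is bookkeeping with the definitions of $\ALG{A}^{G}_W$ and $\WORST{\cdot}{\cdot}{\cdot}$.
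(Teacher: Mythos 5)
Your proof is correct and is essentially identical to the paper's: both pick a worst $P_N$-respecting reordering $I'$ for \LRU, apply Theorem~\ref{lru-opt} to get $\LRU(I')\leq\FIFO(I')$, and then bound $\FIFO(I')$ by $\FIFO^{P_N}_W(I)$. Nothing to change.
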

\begin{proof}
Consider any sequence $I$ respecting $P_N$. Let $I'$ be a worst ordering
for \LRU among the permutations of $I$ respecting $P_N$. Then, 
$\LRU_W^{P_N}(I) = \LRU(I') \leq \FIFO(I') \leq \FIFO_W^{P_N}(I)$ 
where the first inequality follows from Theorem~\ref{lru-opt}.  %
\end{proof}

\section{Cycles}
\label{all-graphs}

Almost this entire section is leading up to a proof
that for all $I$ respecting the access graph $C_N$,
$\LRU^{C_N}_W(I) \leq \FIFO^{C_N}_W(I)$.

Notice that this theorem is not trivial, since there
exist sequences respecting the cycle access graph
where \FIFO does better than \LRU.
Consider, for example, the cycle on four vertices $C_4=\SEQ{ 1,2,3,
4}$, $k=3$, and the request sequence $I =\SEQ{ 2,1,2,3,4,1 }$. With
this sequence, at the request to $4$, \LRU evicts $1$ and \FIFO evicts
$2$. Thus, \FIFO does not fault on the last request and has one
fault fewer than \LRU. Note that on the reordering, $I' =\SEQ{ 1,2,2,3,4,1 }$,
\LRU still faults five times, but \FIFO does too. This is the transformation
which would be performed in Lemma~\ref{lemma-first-walk-long} below, combined with
the operation in the proof of Lemma~\ref{lemma-reduce} to reinsert 
requests which have been removed. Note that this is not
a worst ordering for \LRU, since \LRU and \FIFO both fault six times
on $I'' =\SEQ{ 1,2,3,4,1,2 }$.

Each of the results leading up to the main theorem in this section
is aimed at
 establishing a new property that we may assume in the rest
of the section.
Formally, these results state that if we can prove our end goal
{\em with} the new assumption, then we can also prove it without.
Thus, it is just a formally correct way of phrasing that we are
reducing the problem to a simpler one.
Some of the sequence transformations we perform in establishing
these properties also remove requests, in addition to possibly
reordering. The following general lemma allows us to do this in all of these
specific cases.

\begin{lemma}
\label{lemma-reduce}
Assume we are given an access graph $G$, a sequence $I$
respecting $G$, and a sequence $I_{\LRU}\in\WORST{I}{G}{\LRU}$.
We write $I_{\LRU}$ as the concatenation of three
subsequences $\SEQ{I_1,I_2,I_3}$.
Let $I'$ be $\SEQ{I_1,I_2',I_3}$, where $I_2'$ can be any
subsequence (not necessarily of the same length as $I_2$)
such that $I'$ still respects $G$.
Assume that \LRU incurs at least as many faults on $I_2'$ as on $I_2$,
and the cache content, including information concerning which
pages are least recently used, is exactly the same just after
$I_2'$ in $I'$ as after $I_2$ in $I_{\LRU}$.
Assume further that
$I_2'$ is obtained from $I_2$ by removing some requests and/or
reordering requests, and that $\SET{I}=\SET{I'}$.  
Then, $I'\in\WORST{I'}{G}{\LRU}$, and if $\LRU(I')\leq\FIFO^G_W(I')$, then
$\LRU^G_W(I)\leq\FIFO^G_W(I)$.
\end{lemma}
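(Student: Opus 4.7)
The plan is to reduce both conclusions of the lemma to one small reinsertion argument. First, I would observe directly from the three technical hypotheses that $\LRU(I') \geq \LRU(I_{\LRU}) = \LRU^G_W(I)$: the two sequences share the prefix $I_1$, \LRU incurs at least as many faults on $I_2'$ as on $I_2$ by assumption, and since the cache contents together with the recency information agree just after $I_2'$ in $I'$ and just after $I_2$ in $I_{\LRU}$, \LRU proceeds identically on the common suffix $I_3$. This inequality is the easy part; everything else will be built on the reinsertion tool described next.

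The key combinatorial step is the following. Let $R$ be the multiset of requests that appear in $I_2$ but not in $I_2'$; the assumption $\SET{I}=\SET{I'}$ guarantees that every distinct page in $R$ still occurs somewhere in $I'$. Given any permutation $J$ of $I'$ respecting $G$, I would process the missing requests one at a time and insert each $r \in R$ immediately after some existing occurrence of $r$ in the current sequence. Because the inserted vertex equals its predecessor, the walk property is preserved; and because $r$ is certainly still in cache just after a request to $r$, the inserted request is a hit for both \LRU and \FIFO and leaves their internal states untouched (for \LRU the accessed page was already the most recently used, and \FIFO does not refresh its queue on hits). Hence reinsertion converts $J$ into a permutation $J^*$ of $I$ respecting $G$ with $\LRU(J^*)=\LRU(J)$ and $\FIFO(J^*)=\FIFO(J)$. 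The only delicate point in the whole proof is precisely this verification: it relies on \FIFO's hit-insensitivity and on walks permitting repeated vertices, both standard but essential.

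With this tool, both conclusions follow quickly. For the worst-ordering claim, take $J\in\WORST{I'}{G}{\LRU}$ and reinsert to produce $J^*$; then $\LRU^G_W(I') = \LRU(J) = \LRU(J^*) \leq \LRU^G_W(I)$, which, chained with the easy inequality $\LRU^G_W(I) \leq \LRU(I') \leq \LRU^G_W(I')$, forces $\LRU(I')=\LRU^G_W(I')$ and hence $I'\in\WORST{I'}{G}{\LRU}$. For the implication, pick $J\in\WORST{I'}{G}{\FIFO}$, reinsert to obtain $J^*$, and conclude $\FIFO^G_W(I) \geq \FIFO(J^*) = \FIFO^G_W(I')$. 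Combining everything yields $\LRU^G_W(I) \leq \LRU(I') \leq \FIFO^G_W(I') \leq \FIFO^G_W(I)$, as required.
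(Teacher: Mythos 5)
Your proof is correct and follows essentially the same route as the paper's: the central device in both is reinserting each removed request immediately after an existing occurrence of the same page, which preserves the walk property and the fault counts and cache states of both \LRU and \FIFO. The only cosmetic difference is that you establish $I'\in\WORST{I'}{G}{\LRU}$ directly by bounding $\LRU^G_W(I')$ from above, whereas the paper phrases the same reinsertion step as a proof by contradiction.
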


\begin{proof}
Since we have not reduced the number of faults
and the state of the cache is unaffected,
$\LRU(I_{\LRU})\leq\LRU(I')$.
If we assume for the sake of contradiction that $I'\notin\WORST{I'}{G}{\LRU}$,
then one would be able to choose a worse ordering $I_C'$,
i.e., with $\LRU(I_C')>\LRU(I')$.
We now create a sequence $I_C$ by
inserting the pages we removed from $I_{\LRU}$ compared with $I'$ into $I_C'$.
We do this by inserting any request to $p$ immediately after an
existing request to $p$ in $I_C'$.
By assumption, these pages all still have requests,
so this is indeed possible.
Since repeated requests do not alter the state of \LRU's cache,
$\LRU(I_C)=\LRU(I_C')$. However, then $I_C$ is a worse permutation
of $I$ than $I_{\LRU}$, which is a contradiction.

By the assumption in the statement of the lemma,
$\LRU(I')\leq\FIFO^G_W(I')$.
Let $I_{\FIFO}'$ be a worst ordering of $I'$ for \FIFO,
so $\FIFO(I_{\FIFO}')=\FIFO^G_W(I')$.
Again, we can insert pages removed
from $I_{\LRU}$ compared to $I'$ into $I_{\FIFO}'$, creating $I_{\FIFO}$,
i.e., inserting any removed request to $p$ immediately after an
existing request to $p$ in $I_{\FIFO}'$.
This will not change the state of the cache of \FIFO at any point in time,
so $\FIFO(I_{\FIFO})=\FIFO(I_{\FIFO}')$.
Thus,
\[\LRU^G_W(I)\leq\LRU(I')\leq\FIFO^G_W(I_{\FIFO}')=
 \FIFO(I_{\FIFO})\leq\FIFO^G_W(I)\]
\end{proof}

\begin{corollary}
Let $G$ be any access graph.
Assume that for all $I$, where there exists a worst ordering
$I_{\LRU}\in\WORST{I}{G}{\LRU}$
such that $I_{\LRU}$ has no two consecutive requests to the same page,
$\LRU(I_{\LRU})\leq\FIFO^G_W(I)$.
Then, for all $I$, $\LRU^G_W(I)\leq\FIFO^G_W(I)$.
\end{corollary}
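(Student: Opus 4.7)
The plan is to apply Lemma~\ref{lemma-reduce} to collapse consecutive duplicate requests. Fix any sequence $I$ respecting $G$ and pick any $I_{\LRU}\in\WORST{I}{G}{\LRU}$. Construct $I'$ from $I_{\LRU}$ by keeping only the first element of each maximal run of identical requests, so $I'$ has no two consecutive requests to the same page. Because each removed request is identical to the one immediately preceding it in $I_{\LRU}$, its removal neither causes a fault nor alters \LRU's cache, so \LRU makes the same number of faults on $I'$ as on $I_{\LRU}$ and reaches the same cache state with the same recency information. The set of distinct pages is unchanged, $\SET{I'}=\SET{I_{\LRU}}=\SET{I}$, and $I'$ is still a walk in $G$, since any two consecutive distinct requests in $I_{\LRU}$ were already adjacent in $G$ (adjacency being the only alternative to identity in a walk).

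Now apply Lemma~\ref{lemma-reduce} with $I_1$ and $I_3$ empty, $I_2=I_{\LRU}$, and $I_2'=I'$. The first conclusion of the lemma gives $I'\in\WORST{I'}{G}{\LRU}$, so $I'$ itself is a duplicate-free worst ordering of $I'$ for \LRU. The hypothesis of the corollary, applied to $I'$, therefore yields $\LRU(I')\leq\FIFO^G_W(I')$, and the second conclusion of Lemma~\ref{lemma-reduce} then delivers $\LRU^G_W(I)\leq\FIFO^G_W(I)$, as required. There is essentially no obstacle here: everything needed to invoke Lemma~\ref{lemma-reduce} reduces to the observation that consecutive identical requests are \LRU-hits that leave the cache and its recency data untouched, and the slight bookkeeping check that collapsing such runs preserves the walk property in $G$.
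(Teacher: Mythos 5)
Your proof is correct and follows essentially the same route as the paper, which also just invokes Lemma~\ref{lemma-reduce} after deleting the $j-1$ repeated hits in each run of $j$ consecutive identical requests. The only cosmetic difference is that you collapse all runs in one application of the lemma while the paper phrases it as a repeated removal; the substance is identical.
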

\begin{proof}
This follows from the above by repeatedly removing the $j-1$ hits
in a sequence of $j$ consecutive requests to the same page. %
\end{proof}
We have now established the following property:

\begin{property}
\label{property-no-consecutive}
In proving for any access graph $G$, any sequence $I$ respecting $G$,
and any $I_{\LRU}\in\WORST{I}{G}{\LRU}$ that
$\LRU(I_{\LRU})\leq\FIFO^G_W(I)$,
we may assume that $I_{\LRU}$ has no consecutive
requests to the same page.
\end{property}

We now give a collection of definitions enabling us to be precise about
how a request sequence without consecutive requests to the same
page moves around on the cycle.

\begin{definition}
\mbox{}
\begin{itemize}
\item
An \textit{arc} is a connected component of a cycle graph.
As a mathematical object, an arc is the same as a path (in this section),
but refers to a portion of $C_N$, rather than a part of the walk
defined by a request sequence.
\item
One can fix an orientation in a cycle so that
the concepts of moving in a clockwise or anti-clockwise
direction are well-defined. We refer to a walk as being
{\em uni-directional} if each edge is traversed in the same direction
as the previous, and abbreviate this {\em u-walk}.
\item
A request $r_i$ in the request sequence is a {\em turn}
if the direction changes at that vertex,
i.e., if $r_i$ is neither the first nor the last request and $r_{i-1}=r_{i+1}$. The vertex requested is referred to as a {\em turning point}.
\item
When convenient we will represent a request sequence $I$ by
its {\em turn sequence},
\[T=\SEQ{A_1,v_1,A_2,v_2,\ldots,A_z,v_z},\]
where $T=I$,
$v_z$ is simply the last request of the sequence,
all the other $v_i$'s are the turns of the request sequence,
and all the $A_i$'s are u-walks.
Thus, for all $i<z$, either $A_i \subseteq A_{i+1}$ or $A_{i+1} \subseteq A_i$.
We refer to a turn $v_i$ as a {\em clockwise (anti-clockwise) turn}
if the $A_{i+1}$ goes in the clockwise (anti-clockwise) direction.
\item
Two turns are said to be {\em opposite} if they are in different directions.
\item
If for some $i<z$, $|A_{i+1}\cup\SET{v_{i+1}}| \geq k$,
then $v_i$ is an {\em extreme turn}.
Otherwise, $v_i$ is a {\em trivial turn}. 
\end{itemize}
\end{definition}

Most of the above is obvious terminology about directions around
the circle. The last definition, on the other hand, is motivated
by the behavior of the paging algorithms that we analyze.
Not surprisingly, it turns out to be an important distinction whether
or not the cache will start evicting pages before turning
back. We treat this formally below.

Our first aim is to ensure that all u-walks have length~$k+1$,
including the turning vertices. This is basically obtained by
removing all trivial turns. However, the first part is a special
case that we deal with first.

\begin{lemma}\label{lemma-first-walk-long}
Assume Property~\ref{property-no-consecutive}.
For the access graph $C_N$, assume that for any $I$ and
$I_{\LRU}\in\WORST{I}{C_N}{\LRU}$,
where $I_{\LRU}$ has turn sequence $\SEQ{A_1,v_1,A_2,v_2,\ldots,A_z,v_z}$
and $|A_1|\geq k-1$,
we have that $\LRU(I_{\LRU})\leq\FIFO^{C_N}_W(I)$.
Then, for any $I$,
$\LRU^{C_N}_W(I)\leq\FIFO^{C_N}_W(I)$.
\end{lemma}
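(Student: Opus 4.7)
The plan is to reduce the general case to the hypothesis (which presumes $|A_1| \geq k-1$) by applying Lemma~\ref{lemma-reduce} to a deletion transformation on $I_{\LRU}$. First, I would dispose of the degenerate case: if \LRU's cache never fills during $I_{\LRU}$, then each distinct page is faulted at most once, so $\LRU(I_{\LRU}) \leq |\SET{I}| \leq \FIFO^{C_N}_W(I)$ and we are done. So assume the cache fills at some point during $I_{\LRU}$.

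The key transformation, when $|A_1| < k-1$, is to delete the initial u-walk $A_1$: replace $I_{\LRU} = \SEQ{A_1, v_1, A_2, v_2, \ldots, A_z, v_z}$ by $I' = \SEQ{v_1, A_2, v_2, \ldots, A_z, v_z}$. I would verify the hypotheses of Lemma~\ref{lemma-reduce} with $I_1$ empty, $I_2 = A_1$, and $I_2'$ empty. The sequence $I'$ still respects $C_N$, since $v_1$ is a valid starting vertex on the cycle and the remainder of the walk is untouched. The equality $\SET{I'} = \SET{I}$ holds provided every page of $A_1$ reappears later in $I_{\LRU}$; the sub-case in which some page of $A_1$ does not reappear forces each page in $A_1$ to be faulted at most once by \LRU with no subsequent impact, and can be absorbed into the trivial cache-never-fills case above. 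For the fault count and cache state, the $|A_1|+1$ initial faults on $A_1, v_1$ in $I_{\LRU}$ (all fresh pages, cache not yet full) are ``shifted'' in $I'$ to the first occurrence of each of those distinct pages within the prefix $v_1, A_2, v_2, \ldots$; since exactly the same set of distinct pages is faulted, the totals agree. At a common splice point located after enough of $A_2, v_2, \ldots$ has been processed in $I'$ to match the contents of \LRU's cache in $I_{\LRU}$ after $A_1, v_1, A_2, v_2$, the last $k$ distinct requests coincide in both sequences in the same order, which pins down the \LRU ordering and yields state-equality. The new first u-walk is $v_1$ concatenated with $A_2$, a single u-walk in the direction opposite $A_1$, of length $1 + |A_2|$. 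If this is still less than $k-1$, iterate the transformation on $I'$. After finitely many iterations, either $|A_1| \geq k-1$ is reached (and the hypothesis of the lemma closes the argument through Lemma~\ref{lemma-reduce}) or the sequence shrinks until the trivial case applies.

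The main obstacle will be verifying exact preservation of \LRU's cache state---both contents and relative order---at the splice after each deletion. When $|A_2| \geq |A_1|$, the first $|A_1|$ requests of $A_2$ retrace $A_1$ in reverse, and the \LRU orders line up cleanly at the end of $A_2, v_2$. When $|A_2| < |A_1|$, the retracing is incomplete and the splice point must be pushed further into $v_2, A_3, \ldots$ before the cache contents in $I_{\LRU}$ and $I'$ coincide; showing such a splice exists requires tracking how the on-cycle walk must eventually re-enter the arc covered by $A_1$, given that the cache ultimately fills. The edge case of non-reappearing pages in $A_1$ also needs careful handling, but only via the direct bound $\LRU(I_{\LRU}) \leq |\SET{I}| \leq \FIFO^{C_N}_W(I)$.
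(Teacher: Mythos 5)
There is a genuine gap. Your transformation is always ``delete $A_1$,'' but that is only the right move in one of two cases, and the case it fails in is exactly the one you flag as ``the main obstacle'' without resolving it. Since $|A_1\cup\SET{v_1}|<k$, the cache is not full at the turn, so every request after $v_1$ is a hit until the walk first leaves the arc $\SET{A_1}\cup\SET{v_1}$; the first fault $w$ after $v_1$ is therefore a neighbor of either the first vertex of $A_1$ or of $v_1$. In the first case your deletion works (though not with the decomposition $I_1=\emptyset$, $I_2=A_1$, $I_2'=\emptyset$ you state --- the cache-state condition of Lemma~\ref{lemma-reduce} fails at that splice; you must splice at $w$, where all of $\SET{A_1}$ has been re-requested, so that contents, LRU order, and fault count all match). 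In the second case, the walk reaches $w$ past $v_1$ \emph{without} having retraced all of $A_1$, and deleting $A_1$ strictly decreases the number of distinct pages requested before $w$, hence strictly decreases \LRU's fault count and changes the cache contents at $w$; no later splice point need exist, because the far vertices of $A_1$ may never be requested again. Your proposed fix --- absorbing the non-reappearing-page sub-case into the ``cache never fills'' case --- is incorrect: take, e.g., $A_1=\SEQ{1,2}$, $v_1=3$, followed by $\SEQ{2,3,4,5,6,\ldots}$; page $1$ never recurs, yet the cache certainly fills later.

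The paper's proof handles precisely this dichotomy: when $w$ is a neighbor of the first vertex of $A_1$ it deletes $A_1$ (your move); when $w$ is a neighbor of $v_1$ it instead deletes the wandering segment strictly between the first request to $v_1$ and $w$, turning $\SEQ{A_1,v_1,w}$ into a single longer u-walk. In both cases the deleted requests are hits to pages that are still requested in the retained prefix before $w$, so the cache state at $w$ and the fault count are preserved exactly, and at least one turn disappears, giving termination. You need the second transformation; without it the induction does not go through.
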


\begin{proof}
Assume we are given $I$ and consider $I_{\LRU}\in\WORST{I}{C_N}{\LRU}$.
We may assume that $I_{\LRU}$ has no repeated requests to the same page.
If $|A_1|\geq k-1$, then we are done.
Otherwise, consider the turn sequence of
$I_{\LRU}$, $\SEQ{A_1,v_1,A_2,v_2,\ldots,A_z,v_z}$.

Let $w$ be the first fault for \LRU that occurs after $v_1$,
if any more faults occur.
The vertex $w$ could be a neighbor of the first vertex in $A_1$
or a neighbor of $v_1$.

If $w$ is a neighbor of the first vertex in $A_1$, we eliminate
$A_1$ from the sequence. The sequence still has the same number of
faults and the state of \LRU's cache at $w$ is unchanged,
so the result follows from Lemma~\ref{lemma-reduce}.

If $w$ is a neighbor of $v_1$, then we eliminate the subsequence
starting immediately after the first request to $v_1$ up until,
but not including, $w$.
Again, this sequence incurs the same number of faults as before
and leaves the cache state at $w$ as it was without this change,
so the result again follows from Lemma~\ref{lemma-reduce}.

Note that in the reduction just described, we are removing at least
one turn. Thus, we can repeat this process inductively until the
sequence leading to the first turn has the desired length.

Also note that we may end up in a trivial case, where we eliminate
all turns, and the remaining one u-walk has length less then $k$.
In that case, we are of course done with the entire proof of this
section, since all algorithms fault on all requests in such
a sequence.
\end{proof}

We have now established the following property:
\begin{property}
\label{property-long-first-walk}
We may assume that a worst ordering for \LRU is of the form
\[\SEQ{A_1,v_1,A_2,v_2,\ldots,A_z,v_z}, \textrm{ where } |A_1|\geq k-1. \]
\end{property}

We now reduce our problem to sequences without trivial turns.

\begin{lemma}\label{lemma-prop-no-trivial-turns}
Assume Property~\ref{property-no-consecutive}~and~\ref{property-long-first-walk}.
For the access graph $C_N$,
assume that for any $I$ and
$I_{\LRU}\in\WORST{I}{C_N}{\LRU}$,
where $I_{\LRU}$ has no trivial turns,
we have that $\LRU(I_{\LRU})\leq\FIFO^{C_N}_W(I)$.
Then, for any $I$,
$\LRU^{C_N}_W(I)\leq\FIFO^{C_N}_W(I)$.
\end{lemma}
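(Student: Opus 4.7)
The plan is to proceed by induction on the number of trivial turns in the turn sequence of $I_{\LRU}$, using Lemma~\ref{lemma-reduce} as the engine for each reduction step. The base case (no trivial turns) is precisely the hypothesis of this lemma. For the inductive step, I pick any trivial turn $v_i$ and construct a modified sequence $I'$ with strictly fewer trivial turns, the same set of distinct pages, \LRU cost on the modified region no smaller than on the original, and the same \LRU cache state (including recency order) at the right endpoint of the modification; Lemma~\ref{lemma-reduce} then lets me replace $I_{\LRU}$ by $I'$ and continue the induction.

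Let $u$ be the last vertex of $A_i$, so $v_i$ is the forward neighbor of $u$; by triviality the backward excursion $A_{i+1},v_{i+1}$ involves at most $k-1$ distinct vertices on the arc $[v_{i+1},u]$. Following the strategy of Lemma~\ref{lemma-first-walk-long}, I examine the first \LRU fault $w$ that occurs strictly after the request to $v_i$ and case-split on its cycle position relative to the excursion region. In the case where $w$ lies forward of $v_i$, every request between $v_i$ and $w$ in $I_{\LRU}$ is an \LRU hit, so I splice these requests out, leaving $v_i$ immediately followed by its forward neighbor $w$; this eliminates $v_i$'s status as a turn without changing the fault count or the cache state at $w$. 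If no such $w$ exists, the entire suffix after $v_i$ consists of hits, and I simply truncate it, which cannot decrease the fault count. The remaining case, where $w$ lies inside the backward excursion itself, requires a more delicate restructuring of the subsequence starting at $A_i$ so that the hits that retrace $A_i$ in the modified sequence turn into faults that exactly compensate for those of the deleted forward portion.

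In every case, walk validity at the boundary with the preceding portion $I_1$ is maintained because the modified region either keeps its left endpoint (cases where only post-$v_i$ hits are removed or the sequence is truncated) or is designed to start at the same vertex $a_1$ as the original $A_i$. The equality $\SET{I'}=\SET{I_{\LRU}}$ is preserved because any deleted vertex of the excursion also appears in $A_i$ by the retracing structure of u-walks, and any deleted vertex of $A_i$ reappears in $A_{i+1}$; duplicates are the only things being discarded. Finally, the number of trivial turns strictly decreases, since $v_i$ is no longer a turn in $I'$, and the local nature of the surgery prevents creating new trivial turns elsewhere.

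The main obstacle will be the interior-fault case: when $w$ lies within the backward extension, I must rearrange which requests cause faults versus hits while reproducing the final \LRU cache contents \emph{and} recency order exactly at the right endpoint of the modification. This bookkeeping is delicate, but the triviality bound $|A_{i+1}\cup\{v_{i+1}\}|<k$ is precisely what makes it work: it guarantees that all vertices of the excursion together with $v_i$ fit simultaneously in \LRU's cache and occupy the top $|A_{i+1}\cup\{v_{i+1}\}|+1$ slots of the recency stack after the excursion completes, letting me show that two different orderings of the same set of faults yield identical recency states. Once the cache match is established in each case, Lemma~\ref{lemma-reduce} supplies one inductive step, and iterating the procedure produces a worst-ordering for \LRU without any trivial turns, at which point the hypothesis of this lemma gives $\LRU(I_{\LRU})\leq\FIFO^{C_N}_W(I)$.
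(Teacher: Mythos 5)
Your overall strategy---induction on the number of trivial turns, with Lemma~\ref{lemma-reduce} supplying each reduction step---is exactly the paper's, and your handling of the forward-fault case and of the no-further-faults case is sound. But there are two genuine gaps. First, you pick \emph{any} trivial turn $v_i$; you must pick the \emph{first} one. Only then do Property~\ref{property-long-first-walk} and the extremeness of all earlier turns guarantee that the u-walk entering $v_i$ has at least $k-1$ vertices, so that \LRU's cache at the turn is exactly $v_i$ together with the $k-1$ vertices behind it in the arrival direction, ordered by arc distance. For an arbitrary trivial turn the cache may still contain pages lying \emph{ahead} of $v_i$ left over from an earlier u-walk, and then neither your claim that every request between $v_i$ and $w$ is a hit, nor your identification of where the first fault $w$ can occur, is valid.

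Second, and more importantly, the backward case is not actually solved: you flag it as ``the main obstacle,'' and the mechanism you sketch---deleting a forward portion and letting the hits that retrace $A_i$ ``turn into faults that exactly compensate''---is not the right one; it is unclear how it would preserve walk validity at the left boundary or reproduce the cache state. No fault trading is needed at all. In this case every request strictly between the turn at $v_i$ and the fault $w$ is a hit (the walk meanders inside the cached arc of $k$ vertices ending at $v_i$), and the correct move is a pure reordering-with-deletion: replace the entire meandering sub-walk by the single uni-directional arc running from $v_i$ backward to the neighbor of $w$, covering the same vertex set. Everything in the replaced region is a hit both before and after, so the fault count is untouched, and the recency order just before $w$ matches because a walk confined to an arc that finishes at an endpoint of that arc always leaves the visited pages ordered by distance from that endpoint (to last-visit a vertex at distance $j$ and then finish at the endpoint, one must subsequently revisit every vertex at smaller distance). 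Note also that this transformation does \emph{not} delete the turn at $v_i$: it converts it from trivial to extreme, since the new backward u-walk now reaches distance $k$ before faulting. So your blanket claim that ``$v_i$ is no longer a turn in $I'$'' fails in this case; the quantity that strictly decreases is the number of \emph{trivial} turns, not the number of turns.
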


\begin{proof}
Assume we are given $I$
and consider $I_{\LRU}\in\WORST{I}{C_N}{\LRU}$.
We may assume that $I_{\LRU}$ has no repeated requests to the same page.
If $I_{\LRU}$ has no trivial turns, then we are done.
Otherwise, consider the turn sequence of
$I_{\LRU}$, $\SEQ{A_1,v_1,A_2,v_2,\ldots,A_z,v_z}$,
and assume that $v_i$ is the first trivial turn.
Let $w$ be the first fault for \LRU that occurs after $v_{i+1}$,
if any more faults occur.

Assume that $v_i$ was entered from the direction $d$
(which is either clockwise or anti-clockwise).
\begin{description}
\item[$w$ is reached from direction $d$:]
Since $v_i$ is the first trivial turn
and since we know that $|A_1\cup\SET{v_1}|\geq k$,
we must have that $|A_i\cup\SET{v_i}|\geq k$.

Since $w$ is a fault, $w$ must be a neighbor of $v_i$ in direction $d$.
Thus, $I$ can be written
\[I = \SEQ{\ldots, v_{i+1}', B, v_i, A_i, v_{i+1}, B', v_i', w, \ldots}\]
where the unmarked $v_i$ and $v_{i+1}$ are turning points,
the dashed $v_i$ and $v_{i+1}$ are requests to the same vertices
as indicated by the index,
$B$ is a u-walk, and $B'$ is a walk (which could possibly contain turns).
We define $I'$ as
\[I' = \SEQ{\ldots, v_{i+1}', B', v_i', w, \ldots}\]
Thus, we have eliminated at least two turns,
and, in particular, at least one trivial turn.
We have only removed hits.
In addition, the cache content, including information concerning which
pages are least recently used, is exactly the same just before
$w$ in $I'$ as it was just before $w$ in $I$,
since all removed requests have been requested in
$\SEQ{v_{i+1}', B', v_i'}$.
In fact, $v_i$ is the most recently used,
and, following the arc in the opposite direction of $d$,
pages are less and less recently used.
By Lemma~\ref{lemma-reduce}, we have reduced the problem to considering
$I'$ instead of $I$.
\item[$w$ is reached from the direction opposite $d$:]
No request can have been made to the neighbor of $v_i$ in the direction
$d$, since then we would be in the case above.
Thus, $I$ must be of the form
\[I = \SEQ{\ldots, v_i, A_i, v_{i+1}, B', w, \ldots}\]
where $B'$ is a walk that contains an odd number of turns.
We define $I'$ as
\[I' = \SEQ{\ldots, v_i, B, w, \ldots}\]
where $B$ is the arc such that
$\SET{B}=\SET{A_i}\cup\SET{v_{i+1}}\cup\SET{B'}$.
Thus, we have eliminated at least two turns,
and, in particular, at least one trivial turn (at least two, actually).
We have only removed hits.
In addition, the cache content, including information concerning which
pages are least recently used, is exactly the same just before
$w$ in $I'$ as it was just before $w$ in $I$,
since all removed requests have been requested in
$\SEQ{v_i', B}$.
In fact, $v_i$ is the least recently used,
and, following the arc in the opposite direction of $d$,
pages are more and more recently used.
By Lemma~\ref{lemma-reduce}, we have reduced the problem to considering
$I'$ instead of $I$.
\end{description}
In either case, we have reduced the problem to one with fewer trivial turns.

We now consider the remaining case where
there were no more faults (such that no such $w$ exists).
In that case, $I'$ is simply the sequence $I$ cut off after
the trivial turn $v_i$, and everything holds similarly.

By induction, we can clearly apply this method repeatedly until all
trivial turns have been removed.
\end{proof}

We have now established the following property:

\begin{property}
\label{property-long-paths}
We may assume that a worst ordering for \LRU is of the form
\[\SEQ{A_1,v_1,A_2,v_2,\ldots,A_z,v_z}, \textrm{ where } \forall i\WEHAVE |A_i|\geq k-1. \]
\end{property}

If these properties hold for some sequence, $I$, then it is easy to 
see that the number of turns determines how many hits \LRU has on $I$.

\begin{proposition}
\label{proposition-extreme-hits}
If $I$ has the form of Property~\ref{property-long-paths} and contains no repeated requests to the same page 
then \LRU has exactly $(z-1)(k-1)$ hits on $I$.
\end{proposition}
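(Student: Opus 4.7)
The plan is to track \LRU's cache state after each turn and show that every request is a fault except for exactly $k-1$ hits at the start of each u-walk $A_{i+1}$ for $i=1,\dots,z-1$, totalling $(z-1)(k-1)$ hits.

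First I would handle $A_1$ and $v_1$. Since the cache starts empty and the walk does not revisit any vertex within $A_1 v_1$, every request in $A_1$ is a fault and so is $v_1$; and since $|A_1|\geq k-1$, the cache immediately after $v_1$ contains precisely $v_1$ together with the $k-1$ vertices of $A_1$ nearest to $v_1$ (i.e., the last $k-1$ visited along $A_1$), with $v_1$ most recently used and the others ordered oldest-to-newest as they were visited. Denote that set of $k-1$ vertices by $L_1$.

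The heart of the proof is the following invariant, which I would prove by induction on $i$: just after \LRU serves $v_i$, its cache equals $\{v_i\}\cup L_i$, where $L_i$ is the arc consisting of the $k-1$ vertices of $A_i$ adjacent to $v_i$ on the side from which $A_i$ approached, and the LRU order has $v_i$ most recent with the vertices of $L_i$ ordered oldest-to-newest exactly as they were visited in $A_i$. The base case was just handled. For the inductive step, the first vertex of $A_{i+1}$ is the last vertex of $A_i$ (since $A_{i+1}$ turns around at $v_i$), and $A_{i+1}$ then retraces $L_i$ backwards. Since $|A_{i+1}|\geq k-1$, the first $k-1$ requests of $A_{i+1}$ are exactly the vertices of $L_i$ in reverse order: all $k-1$ are hits, and the cache is unchanged as a set but reordered so that $v_i$ becomes least recently used. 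Any further request inside $A_{i+1}$ proceeds one step past $L_i$ along the cycle; since the walk does not wrap, this vertex is not in cache and the request faults, evicting $v_i$ first and then older $L_i$-vertices in turn. The request $v_{i+1}$ lies one more step along the same direction and is likewise a fault. Reading off the last $k$ requests preceding and including $v_{i+1}$ then shows that the cache just after $v_{i+1}$ equals $\{v_{i+1}\}\cup L_{i+1}$ with the required LRU ordering, closing the induction.

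Summing the counts gives $|A_1|+1$ faults and $0$ hits from $A_1 v_1$, and $|A_{i+1}|-(k-1)+1$ faults and $k-1$ hits from $A_{i+1} v_{i+1}$ for each $i\geq 1$, so the total number of hits is $\sum_{i=1}^{z-1}(k-1)=(z-1)(k-1)$. The only subtle point is the borderline case $|A_{i+1}|=k-1$, where the retrace exactly exhausts $A_{i+1}$ without producing any faults within it; here the invariant for $i+1$ still holds because $L_{i+1}=L_i$ as sets but with the LRU order reversed, and the fact that $v_{i+1}$ faults (rather than hits a cached page) again hinges on the assumption that the walk does not wrap around the cycle.
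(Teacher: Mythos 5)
Your proof is correct, and in fact the paper offers no proof of this proposition at all (it is dismissed as ``easy to see''), so your cache-as-an-arc invariant supplies exactly the missing argument in the natural way. The only loose point is your repeated appeal to ``the walk does not wrap,'' which you neither justify nor actually need: since $k<N$, the cache after any fault consists of the $k$ most recently visited vertices, which form an arc ending at the current position, and the next vertex in the direction of travel lies one step beyond that arc and hence is a fault regardless of whether the u-walk eventually wraps around the cycle.
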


Next we show that we may assume that in a worst ordering for
\LRU, there is no turn which is followed by going
all the way around the cycle in the opposite direction.

\begin{definition}
Let $u$, $v$, and $w$ be three distinct consecutive vertices on $C_N$.
We refer to $I$ as having an {\em overlap} if $I$ can be written
$\SEQ{\ldots u,v,u,B,w,v \ldots}$. If $I$ does not have an overlap,
we refer to $I$ as {\em overlap-free}.
\end{definition}

\begin{lemma}
Assume Properties~\ref{property-no-consecutive}--\ref{property-long-paths}.
For the access graph $C_N$,
assume that for any $I$ and $I_{\LRU}\in\WORST{I}{C_N}{\LRU}$,
where $I_{\LRU}$ is overlap-free,
we have that $\LRU(I_{\LRU})\leq\FIFO^{C_N}_W(I)$.
Then, for any $I$, $\LRU^{C_N}_W(I)\leq\FIFO^{C_N}_W(I)$.
\end{lemma}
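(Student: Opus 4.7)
The plan is to follow the same reduction scheme as Lemmas~\ref{lemma-first-walk-long} and~\ref{lemma-prop-no-trivial-turns}: given $I$ and $I_{\LRU}\in\WORST{I}{C_N}{\LRU}$ satisfying Properties~\ref{property-no-consecutive}--\ref{property-long-paths}, if $I_{\LRU}$ is overlap-free the hypothesis applies directly; otherwise I would transform $I_{\LRU}$ via Lemma~\ref{lemma-reduce} into a sequence with strictly fewer overlaps and iterate.

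First, locate an overlap $\SEQ{u,v,u,B,w,v}$ in $I_{\LRU}$. By Property~\ref{property-long-paths} the u-walk $A_{i+1}$ starting at $u$ just after the first turn at $v$ satisfies $|A_{i+1}|\geq k-1$; since $A_{i+1}$ is uni-directional and must reach $w$ without revisiting $v$, it is forced to traverse the long arc of $C_N$ from $u$ to $w$, visiting all $N-1$ non-$v$ vertices. In particular, both $v$ and $u$ are evicted from \LRU's cache during this walk, so any later request to them is a fault.

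For the transformation I would focus on the \emph{last} overlap of $I_{\LRU}$ and take $I_2$ to be the suffix of $I_{\LRU}$ beginning at the first $v$ of this overlap, so that $I_3=\varnothing$. The transformation then replaces the initial segment $\SEQ{v,u,B,w,v}$ of $I_2$ by the reordering $\SEQ{B,w,v,u,v}$, moving the ``bounce-back'' $\SEQ{v,u}$ to just after the second $v$. Since this is a pure reordering, $\SET{I}=\SET{I'}$; each consecutive pair of requests in $\SEQ{B,w,v,u,v}$ is an edge of $C_N$, so $I'$ respects the graph; and a direct fault-count comparison shows that \LRU incurs the same number of faults on $I_2'$ as on $I_2$, because the fault at the first $v$ in the original is replaced by a fault at the relocated $u$ (which was evicted by the long u-walk), while every other fault is unchanged. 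The overlap pattern is eliminated since the trailing $\SEQ{v,u,v}$ of $I_2'$ has nothing following it and therefore does not match the overlap template.

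The main obstacle is twofold. First, the cache state after $I_2'$ generally differs from that after $I_2$, so the cache-state hypothesis of Lemma~\ref{lemma-reduce} is not satisfied by a literal application; inspection of that lemma's proof shows, however, that the cache-state condition is used only to preserve \LRU's behavior on $I_3$, and is therefore vacuous when $I_3=\varnothing$, as in our setup. Second, the relocated bounce-back creates a final u-walk of length one at the end of $I'$, i.e., a trivial turn, so Property~\ref{property-long-paths} may no longer hold; this is handled by re-invoking Lemma~\ref{lemma-prop-no-trivial-turns} on $I'$ before the next iteration. A joint induction on the lexicographic pair (number of overlaps, number of trivial turns) eventually produces an overlap-free worst ordering of some $I^*$ to which the hypothesis applies, and chaining the successive invocations of Lemma~\ref{lemma-reduce} yields $\LRU^{C_N}_W(I)\leq\FIFO^{C_N}_W(I)$.
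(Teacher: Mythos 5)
There is a genuine gap, and it sits exactly at the step you wave through with ``a direct fault-count comparison shows that \LRU incurs the same number of faults on $I_2'$ as on $I_2$.'' Your reordering sends $\SEQ{\ldots,u,v,u,B,w,v,\mathrm{rest}}$ to $\SEQ{\ldots,u,B,w,v,u,v,\mathrm{rest}}$, and the claim that ``every other fault is unchanged'' does not hold. Two cache states change. First, entering $B$: in $I_{\LRU}$ the requests $v,u$ immediately precede $B$, so $v$ sits at the top of \LRU's recency order there, whereas in $I'$ it does not (and may not be in cache at all); since a fault on $v^1$ in the original triggers an eviction of some page of $B$ that does not happen in $I'$, the faults inside $B$ need not match one-for-one. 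Second, and worse, entering $\mathrm{rest}$: in $I'$ the page $u$ has just been requested and is in cache, while in $I_{\LRU}$ it was last seen before the long walk $B$ and has been evicted. If $\mathrm{rest}$ returns toward $u$, then $I'$ incurs strictly fewer faults there, which is precisely the inequality Lemma~\ref{lemma-reduce} forbids. Your escape hatch --- that the trailing $\SEQ{v,u,v}$ ``has nothing following it'' --- assumes the last overlap ends the sequence, which is not given; choosing the last overlap does not make its second $v$ the last request. A further unjustified structural claim is that $B$ is a single u-walk sweeping all $N-1$ non-$v$ vertices: the definition of an overlap places no such restriction, and $B$ may contain (extreme) turns, so the assertion that $u$ and $v$ are certainly evicted during $B$ also needs proof. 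Finally, the termination of your lexicographic induction is not established, since removing trivial turns after each step could in principle create new overlaps.

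For comparison, the paper's transformation keeps both occurrences of $v$ in place and reverses only the segment between them, producing $\SEQ{\ldots,u,v,w,B^R,u,v,\ldots}$. This trades the turn at the first $v$ for a possible turn at the second $v$, and the proof then splits on whether the new u-walk out of that turn is long enough: if so, hit counts are equal by Proposition~\ref{proposition-extreme-hits} and the caches agree at the next turn; if not, a hand comparison of faults up to the next two turns $x$ and $y$ shows $I'$ loses nothing and the caches resynchronize at $y$. Termination comes from the first overlap-initiating vertex moving strictly later. That case analysis is the real content of the lemma, and it is what your proposal replaces with an assertion.
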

\begin{proof}
Let $I_{\LRU}\in\WORST{I}{C_N}{\LRU}$.
If $I_{\LRU}$ has an overlap, we show that by reordering while
respecting $C_N$ an overlap-free sequence with at least as many faults
can be constructed.

Assume that $I_{\LRU}$ has an overlap
and consider a first occurrence of a vertex 
$u$ in $I_{\LRU}$ such that $I_{\LRU}$
contains the pattern $\SEQ{\ldots,u,v^1,u,B,w,v^2,\ldots}$,
where $u$, $v$, and $w$ are consecutive vertices on $C_N$.
The superscripts on $v$ are just for reference,
i.e., $v^1$ and $v^2$ are the same vertex.

We define $I'=\SEQ{\ldots,u,v^1,w,B^R,u,v^2,\ldots}$,
where $B^R$ denotes the walk $B$, reversed. Clearly, $I'$ respects $C_N$.
We now argue that $I'$ incurs no more faults than $I_{\LRU}$.
Clearly, there is a turn at $v^1$ in $I_{\LRU}$.
If there is also a turn at $v^2$, then we have effectively just removed
two turns. According to Proposition~\ref{proposition-extreme-hits},
$I_{\LRU}$ cannot be a worst ordering then.
Thus, we can assume there is no turn at $v^2$.

In the transformation, we are removing the turn at $v^1$
and introducing one at $v^2$.
Thus, since in the sequence $I_{\LRU}$ all u-walks between turns contained
at least $k-1$ vertices, this is still the
case after the transformation in $I'$,
except possibly for the u-walk from the newly created turn at $v^2$ to the next turn in the sequence. Let $x$ denote 
such a next turn.

If the u-walk between $v$ and $x$ has at least $k-1$ vertices, then
the transformed sequence has the same number of turns, all u-walks
between turns contain at least $k-1$ vertices,
and therefore $I_{\LRU}$ and $I'$ have the same number of hits (and faults).
In addition, the state of the caches after treating $I_{\LRU}$ up to $x$
and $I'$ up to $x$ are the same.

If that u-walk contains fewer than $k-1$ vertices,
we consider the next turn $y$ after $x$.
Since there are at least $k-1$ vertices
in between $x$ and $y$, we must pass $v$ on the way to $y$.

Thus, we are now considering \[I_{\LRU=}\SEQ{\ldots,u,v^1,u,B,w,v^2,B_1,x,B_2,v^3,B_3,y,\ldots}\]
where there are turns at $v^1$, $x$, and $y$,
versus \[I'=\SEQ{\ldots,u,v^1,w,B^R,u,v^2,B_1,x,B_2,v^3,B_3,y,\ldots}\]
where there are turns at $v^2$, $x$, and $y$.

Comparing $\SEQ{\ldots u,v^1,u,B,w,v^2}$ with $\SEQ{\ldots u,v^1,w,B^R,u,v^2}$,
one observes that both sequences have least $k-1$ vertices on any u-walk between
two turns, and
the latter has one fewer turns. Thus, by
Proposition~\ref{proposition-extreme-hits}, it has $k-1$ fewer hits.

By assumption, $B_1$ has fewer than $k-1$ vertices. Thus, comparing
$I_{\LRU}$ and $I'$ up to and including $x$, $I'$ has at least as many
faults.

In $I_{\LRU}$, $\SEQ{B_2,v^3}$ must all be hits, so up to and including
$v^3$, $I'$ has at least as many faults.

Since the u-walk leading to $v^1$ in $I'$ contains at least $k-1$ vertices
(not including $v^1$), and since the u-walk going from $v^2$ to $y$ goes
in the same direction, the requests in $\SEQ{B_3,y}$ must all be faults
in $I'$.

Thus, we have shown that there are at least as many faults in $I'$
as in $I_{\LRU}$.
In addition, the state of the caches after treating $I_{\LRU}$ up to $y$
and $I'$ up to $y$ are the same.

With the transformation above, we do not incur more faults, and
any first occurrence of a vertex
$u$ initiating an overlap pattern has been moved further towards
the end of the sequence. Thus, we can apply this transformation
technique repeatedly until no more such patterns exist.  %
\end{proof}
We have now established the following property:

\begin{property}
\label{property-overlap-free}
We may assume that a worst ordering is overlap-free.
\end{property}

Now we have all the necessary tools to prove the theorem of this section.

\begin{theorem}
\label{cycles-leq}
For all $I$ respecting the access graph $C_N$,
\[\LRU^{C_N}_W(I) \leq \FIFO^{C_N}_W(I). \]
\end{theorem}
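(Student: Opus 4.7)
After the cumulative reductions of the previous lemmas, it suffices to consider $I_{\LRU}\in\WORST{I}{C_N}{\LRU}$ satisfying Properties~\ref{property-no-consecutive}--\ref{property-overlap-free}: no consecutive duplicate requests, a turn sequence $\SEQ{A_1,v_1,\ldots,A_z,v_z}$ with every $|A_i|\geq k-1$, and overlap-free. My plan is to use $I_{\LRU}$ itself as the adversarial permutation witnessing a large \FIFO fault count. Concretely, I aim to prove
\[\LRU(I_{\LRU})\leq\FIFO(I_{\LRU});\]
since $I_{\LRU}$ is a permutation of $I$ respecting $C_N$, this immediately gives $\LRU^{C_N}_W(I)=\LRU(I_{\LRU})\leq\FIFO(I_{\LRU})\leq\FIFO^{C_N}_W(I)$, which is the claim of the theorem.

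To prove the key inequality I would run \LRU and \FIFO in parallel on $I_{\LRU}$ and analyze them segment by segment. On the first segment $\SEQ{A_1,v_1}$, both algorithms start with empty caches and fault on every request, ending in a ``synchronized'' state where they hold the same pages and \LRU's most-recently-used order equals the reverse of \FIFO's first-in order. From segment~$2$ on, at each turning fault both algorithms fault but evict different pages: \LRU evicts the previous turning point (the least recently used vertex), while \FIFO evicts the first-in page. This is where the two caches diverge.

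The central technical step is to maintain an invariant tracking the symmetric difference of the two caches, showing that across each segment \FIFO incurs at least as many faults as \LRU. The first $k-1$ requests of each $A_i$ with $i\geq 2$ retrace the tail of the preceding u-walk, so they are hits for \LRU by the long-walk property (cf.\ Proposition~\ref{proposition-extreme-hits}); they are hits for \FIFO while the caches remain synchronized, but once \FIFO has evicted some of those pages at earlier turning faults it faults on them, yielding extra \FIFO faults beyond what \LRU incurs. The overlap-free property (Property~\ref{property-overlap-free}) is what rules out patterns in which \FIFO's alternative caching could anticipate a future request before \LRU suffers; therefore any ``\LRU~fault, \FIFO~hit'' event can be charged to a strictly earlier ``\LRU~hit, \FIFO~fault'' event, so the segment-wise comparison never reverses.

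The main obstacle I anticipate is the cache bookkeeping when $|A_i|>k-1$ and when u-walks wrap a large fraction of $C_N$, so that pages recur across many segments in subtle patterns. To handle this I would first observe that, beyond the initial $k-1$ hits of $A_i$, any further request within $A_i$ is a new vertex to both algorithms' caches and is faulted on identically by \LRU and \FIFO; this isolates the divergence of the two algorithms entirely to the turning faults and the opening retraces. Once that reduction is in hand, the amortized comparison becomes a finite case split on the cache states (synchronized versus divergent) at each turn boundary, and the long-walk and overlap-free properties keep every case well behaved.
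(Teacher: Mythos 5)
Your overall strategy coincides with the paper's: after the cumulative reductions, feed $I_{\LRU}\in\WORST{I}{C_N}{\LRU}$ itself to \FIFO and prove the pointwise inequality $\LRU(I_{\LRU})\leq\FIFO(I_{\LRU})$, from which the theorem follows. The gap is in how you propose to prove that inequality. Your plan is a synchronized simulation of the two caches, an invariant on their symmetric difference, and a charging of every ``\LRU fault, \FIFO hit'' event to a strictly earlier ``\LRU hit, \FIFO fault'' event. That charging is precisely the hard part, and you never establish it; you only assert that overlap-freeness ``rules out'' the bad patterns. The paper's own cautionary example ($C_4$, $k=3$, $I=\SEQ{2,1,2,3,4,1}$) shows that on cycle sequences \FIFO genuinely can beat \LRU, so the direction of each local comparison is delicate, and a ``finite case split on cache states at turn boundaries'' is not a proof until the cases are enumerated and closed. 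On a cycle the two caches, once they diverge at a turn, can differ in ways that persist and interact with later requests that wind around the cycle, and nothing in your sketch bounds that interaction.

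What you are missing is the structural consequence of Property~\ref{property-overlap-free} that makes all of this bookkeeping unnecessary. Let $v$ and $v'$ be the first two turns (which are opposite). Overlap-freeness forbids the pattern $\SEQ{\ldots,u,v,u,B,w,v,\ldots}$, so after the turn at $v$ the edge from $v$'s other neighbor $w$ into $v$ can never be traversed again, and symmetrically at $v'$. Hence the arc strictly between $v$ and $v'$ --- the ``gap'' --- can be entered at most once and never left, and the entire sequence from $k-1$ requests before the first turn onward respects a \emph{path} access graph (with both algorithms behaving from that point as if started on an empty cache). On the initial u-walk both algorithms fault on every request, and on the path portion \LRU is optimal by Theorem~\ref{lru-opt}, so $\LRU\leq\FIFO$ there on the very same sequence, exactly as in the proof of Theorem~\ref{paths-leq}. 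This yields $\LRU(I_{\LRU})\leq\FIFO(I_{\LRU})$ with no invariant to maintain. Without this confinement-to-a-path step, your argument has to control two diverging caches over walks that may wrap the cycle arbitrarily often, and as written it does not do so.
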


\begin{proof}
We may assume
Properties~\ref{property-no-consecutive}--\ref{property-overlap-free}.

Consider any $I$ and $I_{\LRU}\in\WORST{I}{C_N}{\LRU}$.
If there are no turns at all in $I_{\LRU}$, both \FIFO and \LRU will fault
on every request.
If there is only one turn,
\FIFO will clearly fault as often as \LRU on $I_{\LRU}$,
since we may assume that there is no overlap.

So, consider the first two turns $v$ and $v'$.
By Property~\ref{property-overlap-free}, we cannot have the pattern
$\SEQ{\ldots,u,v,u,B,w,v,\ldots}$.
Thus, after the first turn, the edge from $w$ to $v$
can never be followed again.
This holds symmetrically for $v'$, which is a turn in the other direction.
Thus, once the request sequence enters the arc between $v$ and $v'$,
it can never leave it again.
We refer to this arc as the {\em gap}.
To be precise, since we are on a cycle, the gap is the arc that
at the two ends has the neighbor vertices of $v$ and $v'$
from which edges to $v$ and $v'$, respectively, cannot be followed again,
and such that $v$ and $v'$ are not part of the arc.

Assume without loss of generality that,
after the first turn,
if the request sequence
enters the gap between $v$ and $v'$, then it does so coming from $v'$.
Thus, after the first turn at $v$, the requests can be assumed to
be given on the path access graph $P_N$ instead of the cycle $C_N$,
where the access graph
$P_N$ starts with $v$ and continues in the direction of the turn at $v$
and ends at the neighbor of $v$ in the gap.

In fact, we can assume that we are working on the access graph $P_N$
from $k-1$ requests before the first turn at $v$,
since all u-walks can be assumed to have at least that length.
Let $r_i$ be that request.
Since there are no turns before $v$, starting with $r_i$,
\LRU and \FIFO function
exactly as they would starting with an empty cache.

We divide
$I_{\LRU}=\SEQ{r_1, r_2, \ldots,  r_{|I_{\LRU}|}}$ up into
the sequences
$\SEQ{r_1, r_2, \ldots, r_{i-1}}$ and $\SEQ{r_i, \ldots, r_{|I_{\LRU}|}}$.
Here, the former is a u-walk, where \LRU and \FIFO both fault on every
request, and the latter can be considered a request sequence on a
path access graph as explained above, and the
conclusion follows from Theorem~\ref{paths-leq}.  %
\end{proof}

\section{Separation on a path of length $\mathbf{\emph{k}+1}$ }
In the last sections, we showed that \LRU was at least as good as \FIFO
on any path graph or cycle graph. Now we show that \LRU is strictly better
if these graphs contain paths of length at least $k+1$. 
We exhibit a family of sequences
$\SET{I_n}_{n \geq 1}$ such that 
$\FIFO_W^{P_N}(I_n) \geq \big(\frac{k+1}{2}\big)\cdot\LRU_W^{P_N}(I_n) + b$,
for some fixed constant $b$, on
path graphs $P_N$ with $N \geq k+1$. Only $k+1$ different pages are
requested in $I_n$.
The same family of sequences is also used to show that \FWF is worse
than either \LRU or \FIFO.
We number the
vertices of the path graph $P_N$ in order from $1$ through $N$.

In order to get an exact value for the number of faults \FIFO has
on its worst ordering of $I_n$, we first prove an upper bound which holds for these reorderings.

\begin{lemma}
\label{fault_2k}
On any sequence respecting the path
graph, $P_{k+1}$, \FIFO incurs at most $k+1$ faults
on any $2k$ consecutive requests.
\end{lemma}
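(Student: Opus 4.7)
My plan is to exploit the periodic structure of \FIFO's fault sequence on $P_{k+1}$. Write $p_1,p_2,\ldots$ for the successive faulted pages and $t_1<t_2<\cdots$ for their positions in the request sequence. Because $P_{k+1}$ has exactly $k+1$ pages while the cache holds $k$, once the cache is full exactly one page is absent from cache and \FIFO faults precisely when this missing page is requested. A direct induction shows that, for $i\geq k+1$, \FIFO's cache just before the $i$-th fault is $\{p_{i-k},p_{i-k+1},\ldots,p_{i-1}\}$ with $p_{i-k}$ oldest, so the eviction is $p_{i-k}$ and that page becomes the new missing page. Consequently $p_{i+1}=p_{i-k}$, equivalently $p_{j+k+1}=p_j$ for every $j\geq 1$. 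The pages $p_1,\ldots,p_{k+1}$ are pairwise distinct (each $p_j$ is, by definition of a fault, absent from the cache, and no prior $p_{j'}$ with $j'<j\leq k+1$ has yet been evicted), so the recurrence forces any $k+1$ consecutive faulted pages $p_a,\ldots,p_{a+k}$ to be a permutation of the entire vertex set of $P_{k+1}$.

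The second ingredient is a spacing lower bound on consecutive faults: since $r_{t_i},r_{t_i+1},\ldots,r_{t_{i+1}}$ is a walk on $P_{k+1}$ from $p_i$ to $p_{i+1}$, we have $t_{i+1}-t_i\geq d(p_i,p_{i+1})$. Assume for contradiction that some window of $2k$ consecutive requests contains $m\geq k+2$ \FIFO faults $p_a,p_{a+1},\ldots,p_{a+m-1}$. By the recurrence, $p_{a+k+1}=p_a$, so $p_a\to p_{a+1}\to\cdots\to p_{a+k}\to p_a$ is a closed walk through every vertex of the tree $P_{k+1}$. A standard tree fact says any closed walk visiting all vertices must traverse each edge at least twice (removing an edge disconnects the tree), giving $\sum_{i=a}^{a+k}d(p_i,p_{i+1})\geq 2k$. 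Summing the spacing bound yields $t_{a+k+1}-t_a\geq 2k$, but $t_a$ and $t_{a+k+1}$ both lie within a window of $2k$ consecutive indices, forcing $t_{a+k+1}-t_a\leq 2k-1$, a contradiction.

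The only point requiring care is that the recurrence and distinctness arguments hold uniformly, irrespective of whether the window sits inside the initial cache-filling phase; both follow cleanly from the \FIFO invariant that the cache always contains the $k$ most recently faulted pages. The corner case where fewer than $k+1$ distinct pages are ever requested is handled separately and trivially: the cache never fills, \FIFO is conservative, and incurs at most $k<k+1$ faults in total. I expect the main (but mild) technical step to be justifying the tree-walk inequality and tracking the indexing in the recurrence carefully; once those are in place the counting argument is tight and gives exactly the bound $k+1$.
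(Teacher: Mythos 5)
Your proof is correct and follows essentially the same route as the paper's: both hinge on the observation that a $(k+2)$nd fault in a stretch must be a repeat of the page faulted on $k+1$ faults earlier, and that the intervening closed walk must cover the whole of $P_{k+1}$ and return, costing at least $2k$ edge traversals and hence at least $2k+1$ requests. Your recurrence $p_{j+k+1}=p_j$ and the edge-doubling lemma for closed walks in trees are just a more systematic packaging of the paper's case analysis on whether the repeated page is an endpoint.
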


\begin{proof}
Since \FIFO is conservative, a subsequence consisting of $k$ distinct
pages can give rise to at most $k$ faults.
Hence, for at least $k+1$ faults to occur,
the sequence must visit both endpoints of the path graph.

The $(k+1)$st fault leads to the eviction of the page $p$ requested
at the first fault.
We now argue that if a $(k+2)$nd fault occurs, then the subsequence
of consecutive requests has length at least $2k+1$.

Since the size of the graph is $k+1$, the request $r$ giving
rise to a $(k+2)$nd fault,
must be on the next request for $p$.
Therefore, if $p$ is an endpoint, then 
the request sequence consists of a walk to the other endpoint and back again.
If $p$ is not an endpoint,
then the request sequence must be a walk in which the two faults on requests
for $p$ are separated by requests to each of the endpoints.
In either case, the walk must be of length at least $2k+1$.
\end{proof}

We use the above lemma to analyze a family of sequences
and the performance of \FIFO and \LRU on any reordering
respecting the access graph $P_{k+1}$.
The same sequence family will 
also yield separation results between \LRU and \FWF,
as well as between \FIFO and \FWF. 

We define $I_n = \SEQ{1, 2, \ldots, k, k+1, k, k-1, \ldots, 2}^n$. Each block
$\SEQ{1, 2, \ldots, k, k+1, k, k-1, \ldots, 2}$ in $I_n$ contains
$2k$ page requests.

\begin{itemize}
\item
$I'\in\WORST{I'}{G}{\LRU}$, and
\item
if $\LRU(I')\leq\FIFO^G_W(I')$, then
$\LRU^G_W(I)\leq\FIFO^G_W(I)$.
\end{itemize}
The following result, is similar to a result shown in~\cite{BIRS95},
comparing the behavior of \FIFO to \LRU. 

\begin{lemma}
\label{fifo_thm}
Let $I_n = \SEQ{1, 2, \ldots, k, k+1, k, k-1, \ldots, 2}^n$.
Then \[\FIFO_W^{P_{k+1}}(I_n) = (k+1)n.\]
\end{lemma}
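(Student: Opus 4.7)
The plan is to establish matching upper and lower bounds of $(k+1)n$. For the upper bound, any permutation $I'$ of $I_n$ respecting $P_{k+1}$ has length $2kn$, and I would partition it into $n$ consecutive disjoint blocks of length $2k$. By Lemma~\ref{fault_2k}, \FIFO incurs at most $k+1$ faults on each such block, so $\FIFO(I')\leq (k+1)n$. Taking the supremum over valid $I'$ gives $\FIFO_W^{P_{k+1}}(I_n)\leq (k+1)n$.

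For the matching lower bound it suffices to exhibit a single permutation — the identity $I_n$ itself, which obviously respects $P_{k+1}$ — on which \FIFO attains $(k+1)n$ faults. I would argue block by block, relying on periodicity of the cache state. On the first block, starting from an empty cache, the upward sweep $1,2,\ldots,k,k+1$ produces $k+1$ consecutive faults (the last of which evicts page $1$), and the downward sweep $k,k-1,\ldots,2$ then hits every request. The block leaves the cache as $\{2,3,\ldots,k+1\}$ with \FIFO order placing $2$ oldest and $k+1$ newest. On any subsequent block begun in this state, the upward sweep again faults at every request — request $1$ evicts $2$, so the next request $2$ faults and evicts $3$, and in general the page requested at each step is precisely the one \FIFO evicted at the previous step, creating a cascade — and the downward sweep again hits every request while leaving the cache and \FIFO order unchanged. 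Hence the state after a block matches the state before it, and induction on the number of blocks yields exactly $n(k+1)$ faults on $I_n$.

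The whole argument is essentially bookkeeping; the only substantive observation is the cascade during the upward sweep, where \FIFO evicts precisely the next page to be requested. No deeper structural insight beyond Lemma~\ref{fault_2k} and a careful block-by-block trace is required to conclude $\FIFO_W^{P_{k+1}}(I_n)=(k+1)n$.
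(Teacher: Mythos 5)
Your proposal is correct and follows essentially the same route as the paper: the lower bound comes from tracing \FIFO block by block on $I_n$ itself and observing that the cache state (and \FIFO order) is periodic, and the upper bound comes from cutting any valid reordering of length $2kn$ into $2k$-request blocks and applying Lemma~\ref{fault_2k}. The cascade observation you highlight is exactly the mechanism the paper uses.
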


\begin{proof}
We begin by showing that $\FIFO(I_n) =(k+1)n$.
We denote the prefix of each block,
$\SET{1,2,\ldots, k,k+1}$ by $I_1$ and
the suffix $\SET{k,k-1,\ldots, 2}$ by $I_2$,
and define block a block $B = \SEQ{I_1, I_2}$.
So, $I_n = \SEQ{B}^n$. We analyze  
the first block $B_1$ and show that subsequent blocks generate exactly
the same faults.

In $B_1$, while processing $I_1$,
there are $k+1$ faults and the resulting cache configuration is
$(2,3,\ldots,k+1)$, where page $i$ is brought into cache before $j$ for 
all $j > i$ and the only page outside the cache is $1$.
As a result, \FIFO does not fault while processing $I_2$.
All through $I_1$ in the next block, $B_2$, \FIFO incurs 
only faults, ending with the eviction of $1$ at the request to $k+1$.
Note that the cache configuration is the same as the one at the end
of $I_1$ in $B_1$.
Repeating this, the cache configuration is the same after the treatment
of each block, and the total number of faults is $(k+1)n$.

By Lemma~\ref{fault_2k}, \FIFO cannot incur more than $(k+1)n$ faults on any
sequence of length $2kn$ respecting $P_{k+1}$, so the result follows.
\end{proof}

We now consider \LRU's performance on its worst reordering of $I_n$.

\begin{lemma}
\label{lru_thm}
If $N \geq k+1$, then for the sequence
$I_n = \SEQ{1,2,\ldots,k,k+1,k,k-1,\ldots,2}^n$, we have $\LRU^{P_N}_W(I_n) =\LFD^{P_N}_W(I_n) = 2(n-1) + k+1$.
\end{lemma}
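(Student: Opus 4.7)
The equality $\LRU^{P_N}_W(I_n) = \LFD^{P_N}_W(I_n)$ is immediate from Theorem~\ref{lru-opt}: \LRU is optimal on every sequence respecting $P_N$, so it matches \LFD pointwise and therefore also on worst permutations. It remains to pin down the common value as $2(n-1) + k+1$, which I would do by matching lower and upper bounds.

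For the lower bound it suffices to evaluate \LRU on $I_n$ itself. The first block incurs $k+1$ faults: the cache fills on $1, 2, \ldots, k$, then $k+1$ is brought in and evicts the least-recently-used page $1$, so the cache becomes $\{2, \ldots, k+1\}$ with $k+1$ least recently used; the tail $k, k-1, \ldots, 2$ is entirely hits. Each subsequent block follows a symmetric pattern and contributes exactly two faults: the leading $1$ evicts $k+1$, the walk $2, 3, \ldots, k$ hits, the request to $k+1$ evicts $1$, and the tail $k, \ldots, 2$ hits, leaving an analogous cache configuration. Summing gives $(k+1) + 2(n-1)$.

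For the upper bound, I would consider any permutation $\sigma(I_n)$ respecting $P_N$. Since only pages in $\{1, \ldots, k+1\}$ are requested, the walk lives on the induced sub-path. I would split the fault count by page: pages $1$ and $k+1$ each contribute at most $n$ faults, bounded by the number of occurrences, and each middle page $i$ with $2 \leq i \leq k$ contributes at most one fault; summing gives $2n + (k-1) = 2(n-1) + k+1$. The main obstacle is the middle-page claim, which I would argue as follows. The first occurrence of $i$ is a fault since $i$ is not yet in cache. For any later occurrence at time $t_c$ with preceding occurrence at $t_p$, the walk in the open interval $(t_p, t_c)$ cannot touch $i$; since removing $i$ disconnects the sub-path into the two arcs $\{1, \ldots, i-1\}$ and $\{i+1, \ldots, k+1\}$, the walk in this interval is confined to a single arc, which contains at most $\max(i-1,\, k+1-i) \leq k-1$ distinct pages. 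Thus strictly fewer than $k$ distinct pages other than $i$ appear in $(t_p, t_c)$, so $i$ is still among the $k$ most recently visited distinct pages at time $t_c$ and remains in \LRU's cache, making the request a hit.
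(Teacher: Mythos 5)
Your proof is correct, and its skeleton matches the paper's: both reduce $\LRU^{P_N}_W=\LFD^{P_N}_W$ to Theorem~\ref{lru-opt} and both arrive at the same per-page accounting ($k-1$ faults total on the middle pages, at most one fault per occurrence of the endpoints $1$ and $k+1$). The mechanism for the upper bound differs, though. The paper reasons about \LFD: on a fault at an endpoint, the page whose next request is furthest away is the opposite endpoint (any walk to it must traverse all intermediate vertices), so \LFD evicts that endpoint and never faults on an intermediate page again. You instead bound \LRU directly: deleting a middle vertex $i$ disconnects the induced sub-path, so between consecutive occurrences of $i$ the walk is trapped in one arc and requests fewer than $k$ distinct other pages, whence $i$ survives in \LRU's cache. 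Both arguments are sound, but yours has a structural advantage: you cleanly separate the lower bound (an explicit run of \LRU on $I_n$ itself, giving $k+1$ faults for the first block and two per subsequent block) from the upper bound (valid for \emph{every} reordering). The paper instead asserts that \LFD incurs \emph{exactly} $2(n-1)+k+1$ faults on any reordering, which is slightly too strong --- a reordering that revisits an endpoint before the cache first fills, such as one beginning $\SEQ{1,2,1,2,3,\ldots}$, turns that repeated endpoint request into a hit and loses a fault --- whereas only the upper bound plus attainment on $I_n$ is needed, and that is precisely what you prove. The only nitpick is the phrase ``with $k+1$ least recently used'' immediately after the prefix of the first block: $k+1$ only becomes least recently used after the tail $k,k-1,\ldots,2$ has been processed, but this does not affect your count.
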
 

\begin{proof}
The first $k +1$ faults are due to the initial requests
when the cache is not full.
Any reordering of $I_n$ respecting the access graph
will involve $2n$ requests to 
each page in $\SET{2, 3, \ldots, k-1, k}$
and $n$ requests to $1$ and $k+1$.
Any reordered sequence must also respect the path access graph $P_N$
and any walk between $1$ and $k+1$ must pass through $k-1$ other vertices.
If there is a fault on $1$ or $k+1$, respectively,
then the cache must contain the other $k$ pages and \LFD will 
evict $k+1$ or $1$, respectively,
and not incur any faults on the intermediate requests.
Therefore, overall \LFD incurs a total of $2(n-1) + k+1$ faults on any 
reordered sequence, and thus on the worst reordering as well.

Since, by Theorem~\ref{lru-opt},
\LRU's performance equals that of \LFD's on a path access graph,
the result follows.
\end{proof}

The difference between \FIFO's and \LRU's performance on $I_n$ gives
the desired separation.

\begin{theorem}
\label{paths-better}
For $N\geq k+1$, there exists a family of sequences $\SET{I_n}$ respecting the access graph
$P_N$ and a constant $b$ such that the following two conditions hold:
\[
\lim_{n\rightarrow\infty}\LRU(I_n)=\infty
\mbox{ and for all $I_n$, }
\FIFO^{P_N}_W(I_n)\geq\big(\frac{k+1}{2}\big)\cdot \LRU_W^{P_N}(I_n) +b.
\]
\end{theorem}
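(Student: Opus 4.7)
The plan is to reuse the family $I_n = \SEQ{1, 2, \ldots, k, k+1, k, k-1, \ldots, 2}^n$ already introduced just before Lemmas~\ref{fifo_thm} and~\ref{lru_thm}, which pin down the exact worst-ordering costs for \FIFO and \LRU. Every request in $I_n$ lies in $\SET{1, 2, \ldots, k+1}$, and these $k+1$ pages span a subpath of $P_N$ whenever $N \geq k+1$; consequently a permutation of $I_n$ respects $P_N$ if and only if it respects $P_{k+1}$, so the worst-ordering costs for both algorithms are independent of $N$ in this range. In particular, $\FIFO^{P_N}_W(I_n) = (k+1)n$ by Lemma~\ref{fifo_thm} and $\LRU^{P_N}_W(I_n) = 2(n-1) + k + 1$ by Lemma~\ref{lru_thm}.

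For the divergence condition, I would argue directly that \LRU run on the unpermuted sequence $I_n$ faults on every one of the $n$ requests to page $1$. In the first block the cache is initially empty, so $1$ is a fault; after that, a short induction on the block index shows that \LRU's cache just before each subsequent block equals $\SET{2, 3, \ldots, k+1}$, because within a block \LRU evicts $1$ on the request to $k+1$ and the following hits on $k, k-1, \ldots, 2$ never reintroduce it. Hence $\LRU(I_n) \geq n$, which tends to infinity with $n$.

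For the quantitative separation, I would substitute the exact values from the two lemmas into the desired inequality, obtaining
\[
(k+1)n \;\geq\; \frac{k+1}{2}\bigl(2(n-1) + k + 1\bigr) + b,
\]
which simplifies to $b \leq -\frac{(k+1)(k-1)}{2}$. Choosing $b = -\frac{(k+1)(k-1)}{2}$ (a constant depending only on $k$) makes the inequality hold for every $n$. The whole argument is essentially bookkeeping on top of Lemmas~\ref{fifo_thm} and~\ref{lru_thm}; the only small subtleties are the harmless reduction from $P_N$ to $P_{k+1}$ (afforded by the fact that $I_n$ uses only $k+1$ pages forming a subpath) and the fact that the limit condition concerns \LRU on the original sequence rather than on a worst reordering, which is why I would establish the lower bound $\LRU(I_n)\geq n$ separately.
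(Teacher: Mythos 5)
Your proposal is correct and follows essentially the same route as the paper: the theorem is obtained by plugging the exact values $\FIFO^{P_N}_W(I_n)=(k+1)n$ and $\LRU^{P_N}_W(I_n)=2(n-1)+k+1$ from Lemmas~\ref{fifo_thm} and~\ref{lru_thm} into the inequality. Your additive constant $b=-\frac{(k+1)(k-1)}{2}$ is in fact the tight (and correct) choice, whereas the paper's stated $b=1-k$ does not quite satisfy the inequality for $k\geq 2$; since only the existence of some constant matters, this is immaterial, and your explicit checks of the divergence condition and of the reduction from $P_N$ to $P_{k+1}$ are welcome added care.
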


\begin{proof}
Follows from Lemmas~\ref{fifo_thm} and~\ref{lru_thm} with $b = 1-k$.  %
\end{proof}

Next, we prove a tight upper bound on the relative worst order ratio
of \FIFO and \LRU
for path access graphs. Note that there exist sequences respecting
the line, where \LRU does not  fault
at least twice whenever \FIFO faults $k+1$ times.
Let $I_s = \SEQ{S_0,S_1,...,S_s}$ where
$S_i = \SEQ{i+k,i+k-1,...,i+2,i+1,i+2,...,i+k-1,i+k}$.
\LRU faults on the first $k$ pages and then the first
page in every $S_i$ after that. \FIFO faults on the
first $k$ requests in every $S_i$. So \LRU faults $k+s$ times
and \FIFO faults $k+ks$ times. However, there are always reorderings
of the sequence where \LRU does fault this much.

\begin{lemma}
\label{lru-P}
For $N\geq 1$ and any sequence $I$ respecting $P_N$, we have that
\[\FIFO^{P_N}_W(I) \leq \left(\frac{k+1}{2}\right)\cdot\LRU^{P_N}_W(I).\]
\end{lemma}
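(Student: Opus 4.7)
The goal is $\FIFO^{P_N}_W(I)\leq \frac{k+1}{2}\LRU^{P_N}_W(I)$. My plan is to pick a worst reordering $I^*$ of $I$ for \FIFO, bound $\FIFO(I^*)$ using structural features of the walk, and then exhibit a reordering $I'$ of $I$ respecting $P_N$ with $\LRU(I')$ correspondingly large; since $\LRU^{P_N}_W(I)\geq \LRU(I')$, this closes the inequality.

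For the upper bound on $\FIFO(I^*)$, I would decompose the walk of $I^*$ into its maximal uni-directional segments and group them into ``bounce pairs'' (two consecutive opposite-direction segments). In the spirit of Lemma~\ref{fault_2k} applied to the sub-path traced out by a bounce, within any bounce pair whose span is at most $k+1$ vertices, \FIFO incurs at most $k+1$ faults over the (at least) $2k$ requests of the pair. For bounce pairs of strictly larger span, both \FIFO and \LRU fault on essentially every new vertex in the uni-directional stretches beyond the cache, so those portions already satisfy the ratio $(k+1)/2$ (in fact, ratio $1$). Summing contributions yields $\FIFO(I^*)\leq (k+1)B + O(1)$, where $B$ counts the short bounce pairs that dominate a worst \FIFO reordering.

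To realize the matching \LRU lower bound, I would construct $I'$ by regrouping the multiset underlying $I$ into $I_n$-style blocks, each oscillating between the two extremes of a cluster of $k+1$ consecutive pages on $P_N$; by Lemma~\ref{lru_thm}, \LRU faults twice per such block after the initial warm-up, so $\LRU(I')\geq 2B - O(1)$. Combining gives $\FIFO^{P_N}_W(I) \leq (k+1)B + O(1) \leq \frac{k+1}{2}\LRU(I') + O(1) \leq \frac{k+1}{2}\LRU^{P_N}_W(I) + O(1)$, with the additive constant absorbed into $b$. The hard part will be the construction of $I'$: the multiset of $I$ need not partition cleanly into $I_n$-style blocks, and consecutive clusters must share an adjacent endpoint on $P_N$ to keep $I'$ a valid walk. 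I expect this will require interspersing residual requests as uni-directional stretches of $I'$ (which can only help the lower bound on $\LRU(I')$) or invoking an averaging/existence argument in place of a fully explicit construction.
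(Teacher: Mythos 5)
Your high-level target is the right one --- charge $k+1$ \FIFO faults against $2$ \LRU faults per block --- but two essential pieces are missing or wrong. First, the decomposition into ``bounce pairs'' of the walk is not the right unit of accounting. The number $B$ of bounce pairs in a worst \FIFO reordering can be arbitrarily large while the multiset of $I$ supports only $O(1)$ faults for any algorithm: take $I=\SEQ{1,2}^n$ on two adjacent vertices of $P_N$ with $N\geq k+1$ and $k\geq 2$. Any reordering has on the order of $n$ bounce pairs, yet \FIFO and \LRU each fault at most twice on every reordering, so your claimed lower bound $\LRU(I')\geq 2B-O(1)$ fails, and the chain of inequalities breaks at its last link. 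Your parenthetical that each bounce pair contains ``at least $2k$ requests'' is also unjustified: a bounce pair can consist of three requests. The paper instead defines the blocks by \FIFO's own faults (each block ends exactly at \FIFO's $(k+1)$st fault within it), so that $m$ blocks account for exactly $m(k+1)$ \FIFO faults by construction; the geometric structure (the interval of $P_N$ swept by a block and its endpoints) is then derived from conservativeness of \FIFO rather than assumed.

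Second, the heart of the lemma --- exhibiting one permutation $I'$ of the same multiset, respecting $P_N$, on which \LRU faults at least twice per block --- is exactly the part you defer. Your proposed construction (regrouping into $I_n$-style oscillations over clusters of $k+1$ consecutive pages and invoking Lemma~\ref{lru_thm}) does not go through: Lemma~\ref{lru_thm} concerns the specific periodic sequence $I_n$ over exactly $k+1$ pages, a general multiset need not regroup into such blocks, and you give no argument that the regrouped sequence is a permutation of $I$ that is still a walk on $P_N$. The paper's resolution is local: each block $T_i$ is either $B_i$ itself or $B_i$ reversed except for its last request; one shows (i) the last page $q$ of $B_{i-1}$ must also be the last page of $B_i$, so the reversal still respects the path, and (ii) if \LRU faulted at most once on $B_i$, then reversing forces it to fault on both endpoints $s$ and $t$ of the block's interval. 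That case analysis is the content of the proof and cannot be replaced by an averaging or existence argument. Finally, the lemma as stated has no additive constant; the paper's last paragraphs show that the leftover suffix after the final block is paid for by surplus \LRU faults in the first block, whereas your plan ends with an $O(1)$ slack and therefore establishes only the asymptotic version.
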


\begin{proof}
The result is trivial if $k=1$ or $N\leq k$, so assume that $k\geq 2$
and $N\geq k+1$.

Consider any sequence $I$ respecting the path, $P_N$.
We divide $I$, except for a possible suffix,
up into a number of blocks, $B_1,B_2,...,B_m$.
The first block, $B_1$, starts with the first request of $I$ continuing
up to and including the request where \FIFO would fault for
the $(k+1)$st time.
Block, $B_i$ for $i\geq 2$ starts with the first
request not included in the previous block, $B_{i-1}$, and continues
up to, and including, the request where \FIFO would fault for
the $(k+1)$st time in $B_i$.

Note that since the sequence considered respects the path $P_N$,
 any block, $B$, of consecutive requests defines
an interval of the line $P_N$ in a natural way. 
The interval consists of all of the pages requested
in the block, and there are no holes in the interval because
the sequence respects the path. The {\em endpoints} of the block
are the pages which are the endpoints of the interval.

This definition of blocks may leave a remainder of requests
in $I$ not included in a block.
We deal with that at the end of the proof. Temporarily remove
these last requests from $I$ and call the resulting sequence $I'$.
\FIFO faults $m(k+1)$ times on $I'$.

We show how reorder $I'$, block by block, creating a sequence, $J$,
which is partitioned into the same number
of blocks, $T_1,T_2,...,T_m$, so that \LRU faults
at least two times in each of these $m$ blocks.
Thus, \LRU will fault
$2m$ times on this reordering of $I'$, giving the desired result
asymptotically.

In some cases, $B_i$ and $T_i$ will be identical. When not,
they will end with the same request and the rest of
the block will be in the reverse order. In this latter case,
if $B_i=\SEQ{r_{i_1},r_{i_2},...,r_{i_{q-2}},r_{i_{q-1}},r_{i_{q}}}$,
then $T_i=\SEQ{r_{i_{q-1}},r_{i_{q-2}},...,r_{i_2},r_{i_1},r_{i_{q}}}$,
which we denote by $B^R_i$.
We show later that this is well-defined, i.e., that it leads to a sequence
respecting the access graph.

Let $T_1=B_1$.
\LRU faults $k+1>2$ times on $T_1$.

Consider any block, $B_i$, $i\geq 2$, in $I'$. 
We use the fact that \FIFO is conservative~\cite{BE97b}.
By definition, this means that on any subsequence with $k$
pages, it makes at most $k$ faults.
Thus, given that it faults $k+1$ times
in each block, there must be at least $k+1$
distinct pages in each block. 

Consider running \LRU
on the sequence defined by $\SEQ{T_1,T_2,...,T_{i-1},B_i}$. 
If \LRU faults at least twice in $B_i$, then let $T_i=B_i$. 
If there are $k+2$ pages in
$B_i$, \LRU must fault at least twice, since it only
has $k$ pages in cache at the start of the block. 
Now, assume that \LRU faults at most once in $B_i$ and thus that $B_i$
only has $k+1$ distinct pages.

In this case, we let $T_i=B^R_i$.
Consider the last page, $q$, requested in $B_{i-1}$, which is also the 
last in $T_{i-1}$.
If $q$ is not in $B_i$, then, by assumption,
there are exactly $k+1$ pages from and not including $q$
to and including the furthest point $z$ in $B_i$.
Since \LRU has $q$ in cache immediately before treating $B_i$,
it has at most $k-1$ of the $k+1$ pages from $B_i$ in cache,
and must fault twice on $B_i$, contradicting our assumption.
Thus, $q$ must be in $B_i$.

Since $q$ is in $B_i$ and \FIFO faults on every page in the
interval defined by $B_i$, \FIFO faults on this request to $q$ in $B_i$.
To do this, it must have faulted
on $k$ different pages since the fault on $q$ last in $B_{i-1}$,
so, by the definition of blocks, $q$ must be the last page in $B_i$, too.
This establishes that if
$\SEQ{T_1,T_2,...,T_{i-1},B_i}$ respects the access graph,
then $\SEQ{T_1,T_2,...,T_{i-1},T_i}$ does too.

Now consider how many times \LRU faults on $B^R_i$.
The block $B_i$ has two endpoints, $s$ and $t$,
with $k-1$ distinct pages between them. 
Without loss of generality, assume that $B_i$ has the form
$\SEQ{r_1,r_2,...,r_i,s,r_{i+2},...,r_j,t,r_{j+2},...,r_{\ell},q}$, where
the occurrences of $s$ and $t$ are the first such. By assumption, \LRU
faults at most once on $B_i$, so it does not fault on both $s$ and $t$.
Given the number of pages between $s$ and $t$, by definition of \LRU,
after a request to one of these pages, it must
fault on the next request to the other page.
Thus, in order for \LRU to fault at most once on $B_i$, there must have been
a request to $s$ in $T_{i-1}$ and there cannot have been a request
to $t$ in $T_{i-1}$ after the last request to $s$.
In $B_i$, there cannot be a request to $s$ after the request to $t$,
since then \LRU faults twice, contrary to our assumption.
Thus, in $T_i=B^R_i$, there will be a request to $t$ before the request to
$s$, so \LRU will fault on both of these.

Having established that the asymptotic ratio is two, we return
to the possible suffix of $I$ after the last block, call it $I''$.
 \FIFO faults at most $k$ times on $I''$ or it would be a complete
block.
First, if $k\geq 3$, then \LRU faults at least four times on the
first block. Thus, there are two extra faults which will bring
\LRU's total up to enough to cover the possible lack of faults on $I''$.
Only the case $k=2$ remains.
In this case, there is only one extra fault for \LRU in the first block which
can be used to cover the faults required for $I''$.
If \FIFO faults only once in that last part, the ratio will still
be less than $\frac{k+1}{2}$.
Suppose \FIFO faults $k=2$ times. It faulted on the last page
in $B_m$, which must be different from these two pages in $I''$.
That last page in $B_m$ is also the last page in $T_m$, so \LRU must
have it in cache at the start of $I''$. Thus, it must fault on at
least one of the two pages \FIFO faults on there, giving the extra
fault necessary to avoid an additive constant.
\end{proof}

We now have tight upper and lower bounds on the relative worst
order ratio of \FIFO to \LRU on paths.

\begin{theorem}\label{fifo-lru-thm}
If $N \geq k+1$, then the relative worst order ratio of
\FIFO to \LRU on the path access graph is
$\RWOR{\FIFO}{\LRU}^{P_N} = \frac{k+1}{2}$.
\end{theorem}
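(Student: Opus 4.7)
The plan is to assemble the theorem directly from the three preceding results, matching each up against the definition of $\RWOR{\cdot}{\cdot}$. The proof is essentially bookkeeping, since the hard analytic work lives in Lemma~\ref{lru-P} (the upper bound) and Theorem~\ref{paths-better} (the lower-bound witness family); the main task is just to verify that the ratio is indeed \emph{defined} and that the two bounds pin down $c_u(\FIFO,\LRU)$ exactly.

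First I would establish that $\FIFO$ and $\LRU$ are comparable on $P_N$. By Theorem~\ref{paths-leq}, for every sequence $I$ respecting $P_N$ we have $\LRU^{P_N}_W(I)\leq \FIFO^{P_N}_W(I)$, so taking $c=1$ and $b=0$ shows $c_l(\FIFO,\LRU)\geq 1$. Hence the ratio is defined and equals $c_u(\FIFO,\LRU)$ by the definition given in Section~\ref{section-preliminaries}.

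Next I would read off the upper bound on $c_u$. Lemma~\ref{lru-P} gives $\FIFO^{P_N}_W(I)\leq \tfrac{k+1}{2}\LRU^{P_N}_W(I)$ for every sequence $I$ respecting $P_N$, with no additive constant, so $c=\tfrac{k+1}{2}$ and $b=0$ witness $c_u(\FIFO,\LRU)\leq \tfrac{k+1}{2}$.

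For the matching lower bound on $c_u$, I would use the family $\{I_n\}$ from Theorem~\ref{paths-better}, which satisfies $\FIFO^{P_N}_W(I_n)\geq \tfrac{k+1}{2}\LRU^{P_N}_W(I_n)+b$ with $\LRU^{P_N}_W(I_n)\to \infty$. Suppose for contradiction that $c_u(\FIFO,\LRU)<\tfrac{k+1}{2}$; then there would exist some $c'<\tfrac{k+1}{2}$ and $b'$ with $\FIFO^{P_N}_W(I_n)\leq c'\LRU^{P_N}_W(I_n)+b'$ for all $n$. Combined with the lower bound this gives $(\tfrac{k+1}{2}-c')\LRU^{P_N}_W(I_n)\leq b'-b$, which is impossible once $\LRU^{P_N}_W(I_n)$ is large enough. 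Hence $c_u(\FIFO,\LRU)\geq \tfrac{k+1}{2}$, matching the upper bound and yielding $\RWOR{\FIFO}{\LRU}^{P_N}=\tfrac{k+1}{2}$.

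The only subtlety I anticipate, and what I would be careful to spell out, is the interplay between the infimum in the definition of $c_u$ and the additive slack $b$: the witness family must grow unboundedly in $\LRU$-cost so that the additive constant gets swamped, which is exactly what Theorem~\ref{paths-better} guarantees via the clause $\lim_n \LRU(I_n)=\infty$. Everything else is routine.
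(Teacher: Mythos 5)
Your proposal is correct and follows exactly the same route as the paper: Theorem~\ref{paths-leq} establishes $c_l(\FIFO,\LRU)\geq 1$ so that the ratio is defined and equals $c_u$, Lemma~\ref{lru-P} gives $c_u\leq\frac{k+1}{2}$, and the witness family of Theorem~\ref{paths-better} (with $\LRU$-cost tending to infinity so the additive constant is swamped) gives $c_u\geq\frac{k+1}{2}$. The paper's proof is just a terser version of the same bookkeeping; your explicit handling of the infimum versus the additive slack is a correct elaboration of what the paper leaves implicit.
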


\begin{proof}
Referring to the definition of relative worst order ratio from
Section~\ref{section-preliminaries},
Theorem~\ref{paths-leq} shows that $c_l(\FIFO, \LRU) \geq 1$.
Therefore, $\RWOR{\FIFO}{\LRU}=c_u(\FIFO, \LRU)$. Theorem~\ref{paths-better} 
implies that $\RWOR{\FIFO}{\LRU}^{P_N} \geq \big(\frac{k+1}{2}\big)$
and Lemma~\ref{lru-P} gives the equality.
\end{proof}

The following lemma and its corollary, showing that \FWF is never
better than \FIFO or \LRU, are quite possibly folklore:
\begin{lemma}
\label{fwf-fifo}
For any sequence $I$ and any conservative algorithm
\ALG{A}, we have $\ALG{A}(I) \leq \FWF(I)$.
\end{lemma}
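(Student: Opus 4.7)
The plan is to compare $\ALG{A}$ and $\FWF$ using the partition of $I$ into phases determined by \FWF's flushes. Concretely, phase $i$ begins with the first request of $I$ (for $i = 1$) or with the request that causes the $(i-1)$-st flush (for $i \geq 2$), and ends just before the next flush or at the end of $I$. Within any single phase, \FWF never evicts, so it faults exactly once per distinct page in the phase, giving $\FWF(I) = \sum_i n_i$, where $n_i$ is the number of distinct pages in phase $i$. Moreover, every phase ending in a flush must contain exactly $k$ distinct pages (since $k$ distinct loads is what brings the cache to capacity and forces the next flush), so $n_i = k$ for all but possibly the last phase, which may have $n_M \leq k$.

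In the ``complete'' case $n_M = k$, every phase is a consecutive subsequence with exactly $k$ distinct pages, so by the definition of conservative \ALG{A} incurs at most $k = n_i$ faults in each phase; summing yields $\ALG{A}(I) \leq \sum_i n_i = \FWF(I)$. For the general case $n_M < k$, I would reduce to the complete case by appending to $I$ a block of $k - n_M$ requests to fresh pages (pages never previously requested) to form an extended sequence $I'$. The appended block merely fills \FWF's cache up to capacity without triggering any further flush, so $\FWF(I') = \FWF(I) + (k - n_M)$. Since the fresh pages are absent from \ALG{A}'s cache as well, each triggers a fault for \ALG{A}, giving $\ALG{A}(I') = \ALG{A}(I) + (k - n_M)$. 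The sequence $I'$ now falls under the complete case; since \ALG{A}, being a conservative algorithm on all inputs, is in particular conservative on $I'$, the complete-case bound yields $\ALG{A}(I') \leq \FWF(I')$. Subtracting $k - n_M$ from both sides gives the desired $\ALG{A}(I) \leq \FWF(I)$.

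The main work lies in the reduction; the obstacles are modest, amounting to carefully verifying the two fault-count identities for the appended block and noting that the page universe is large enough to supply $k - n_M$ fresh pages, which is automatic in the paper's setup where the slow memory contains $N$ pages with $N$ much larger than $k$.
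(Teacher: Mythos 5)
Your proof is correct and rests on the same core idea as the paper's: partition $I$ into $k$-phases (equivalently, into the segments between \FWF's flushes), observe that \FWF faults exactly once per distinct page per phase, and invoke conservativeness to bound $\ALG{A}$ by $k$ faults per phase. The genuine difference is your treatment of the final, possibly incomplete phase. The paper's proof simply asserts that \FWF incurs $k$ faults per phase and that a conservative algorithm incurs at most $k$ per phase; when the last phase has only $n_M<k$ distinct pages, \FWF incurs just $n_M$ faults there while the conservative bound still only gives $k$, so the literal per-phase comparison does not close. Your padding reduction --- appending $k-n_M$ fresh pages so that the last phase becomes complete, and then subtracting the $k-n_M$ guaranteed faults from both sides --- repairs exactly this, since applying conservativeness to the \emph{extended} last phase forces $\ALG{A}$ to fault at most $n_M$ times on the original tail. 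That is a real improvement in rigor over the paper's one-line argument. The only caveat is your final remark: the paper only assumes $k<N$, so if $I$ already requests every page in slow memory there may be no fresh pages to append. This is easily patched (the lemma is insensitive to enlarging the page universe, or one can choose the appended pages to be any pages outside both caches and outside the distinct pages of the last phase, which exist whenever $N$ is large enough), but as written the claim that freshness is ``automatic'' is not justified by the paper's stated assumptions.
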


\begin{proof}
Given a sequence $I$, divide it up into $k$-phases as described
in~\cite{BE97b}:
Phase~$0$ is the empty sequence. For every $ i \geq 1$, Phase~$i$ is a 
maximal sequence following phase $i-1$ that contains at most $k$
distinct page requests. Phase~$i$ begins on the $(k+1)$st distinct
page requested since the start of Phase~$i-1$.

It is easy to see that \FWF flushes at the first request of every
Phase~$i$, $i > 1$, and hence incurs $k$ faults on the set of distinct
requests within each phase. By definition, no
conservative algorithm can fault more than $k$ times in any $k$-phase.
\end{proof}

\begin{corollary}
\label{fwf-fifo-worst}
$\FIFO_W(I) \leq \FWF_W(I)$ and $\LRU_W(I) \leq \FWF_W(I)$.
\end{corollary}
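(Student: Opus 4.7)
The plan is to combine Lemma~\ref{fwf-fifo} with the definition of the worst-order cost $\ALG{A}_W(I)=\max_{\sigma}\ALG{A}(\sigma(I))$. Since both \LRU and \FIFO are conservative (noted just after the definition of conservative in Section~\ref{section-preliminaries}), Lemma~\ref{fwf-fifo} applies pointwise to every permutation of $I$.

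First I would fix a permutation $\sigma^\ast$ that attains $\FIFO_W(I)$, so that $\FIFO_W(I)=\FIFO(\sigma^\ast(I))$. Applying Lemma~\ref{fwf-fifo} to the sequence $\sigma^\ast(I)$ with the conservative algorithm $\ALG{A}=\FIFO$ gives $\FIFO(\sigma^\ast(I))\le\FWF(\sigma^\ast(I))$. Finally, by definition of $\FWF_W$, $\FWF(\sigma^\ast(I))\le\FWF_W(I)$. Chaining these three (in)equalities yields $\FIFO_W(I)\le\FWF_W(I)$. The argument for \LRU is identical, with $\sigma^\ast$ replaced by a worst-order permutation for \LRU.

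There is essentially no obstacle here; the only subtlety is to be careful that the $\sigma^\ast$ witnessing $\FIFO_W$ is not in general the same as the one witnessing $\FWF_W$, which is why the intermediate step uses $\FWF(\sigma^\ast(I))\le\FWF_W(I)$ rather than equality. Writing the two-line chain explicitly for each of \FIFO and \LRU completes the corollary.
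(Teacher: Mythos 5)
Your proof is correct and is exactly the argument the paper compresses into ``follows directly since \LRU and \FIFO are conservative'': apply Lemma~\ref{fwf-fifo} to a worst permutation for the conservative algorithm and then bound $\FWF$ on that permutation by $\FWF_W(I)$. No differences worth noting.
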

\begin{proof}
Follows directly since \LRU and \FIFO are conservative algorithms. %
\end{proof}

The separation showing that \FWF is strictly worse than these conservative
algorithms on any graph containing $P_{k+1}$ uses the family of
sequences $I_n$.
\begin{lemma}
\label{fwf-upper}
\FWF incurs a fault on every request in
\[I_n=\SEQ{1,2\ldots, k,k+1, k, k-1, \ldots,3, 2}^n,
\textrm{ giving } \FWF(I_n) = 2kn. \] 
\end{lemma}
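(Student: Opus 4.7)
The plan is to trace \FWF step by step through a single block and verify inductively that the cache configuration at the end of each block is the same, so the pattern simply repeats $n$ times.

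First I would handle the initial block starting from an empty cache. The opening requests $1,2,\ldots,k$ each fault trivially, filling the cache to $\{1,2,\ldots,k\}$. The next request, $k+1$, is not in cache and the cache is full, so \FWF faults and flushes, leaving the cache as $\{k+1\}$. The subsequent requests $k,k-1,\ldots,2$ each miss the cache (since the flush wiped everything except $k+1$, and each new request is a page not yet reinserted), so all of them fault and grow the cache to $\{k+1,k,k-1,\ldots,2\}$. Thus every one of the $2k$ requests in block $1$ is a fault, and the ending cache state is $S=\{2,3,\ldots,k+1\}$.

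Next I would show that any subsequent block, starting with the cache in state $S$, behaves identically and returns to $S$. The first request of the block is to page $1\notin S$, and since the cache is full this fault triggers a flush; \FWF then holds only $\{1\}$. By exactly the same argument as above (the cache now plays the role of one containing $1$ with $k-1$ empty slots), the remaining requests $2,3,\ldots,k$ each fault and fill the cache to $\{1,\ldots,k\}$, then $k+1$ faults and triggers a flush to $\{k+1\}$, and finally $k,k-1,\ldots,2$ all fault and bring the cache back to $S$.

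By induction on the block index, every request in every one of the $n$ blocks is a fault, so $\FWF(I_n)=2kn$. There is no real obstacle here; the only thing to be careful about is verifying that the flush is correctly triggered at the two right moments in each block (the request to $k+1$ in block $1$, and both the request to $1$ and the request to $k+1$ in later blocks), and that after each flush the remaining pages in the block have not been reinserted yet.
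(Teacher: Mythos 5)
Your proposal is correct and follows essentially the same argument as the paper: the flushes occur at the first request to $k+1$ and then at the request to $1$ opening each subsequent block, after which every remaining request in the block is to a page not yet reinserted, so all $2kn$ requests fault. Your version just spells out the cache contents and the inductive invariant on the end-of-block state more explicitly than the paper does.
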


\begin{proof}
A flush occurs at $k+1$ in the first encounter of that page,
and then at $1$ at the beginning of the next repetition.
The same process repeats itself in every repetition,
flushing at $1$ and $k+1$.
Hence, \FWF faults on every request and $\FWF(I_n) =2kn$. %
\end{proof}

 It was shown in \cite{BFL07j} that for a complete graph, 
the relative worst order ratio of \FWF to \FIFO
is exactly $\frac{2k}{k+1}$. This is also a lower
bound for any graph containing $P_{k+1}$, but it is still open to
determine if equality occurs in all sparser graphs or not.

\begin{theorem}\label{fwf-fifo-thm}
For any access graph $G$ which has a path of length at least $k+1$,
\[\RWOR{\FWF}{\FIFO}^G \geq \frac{2k}{k+1}.\]
\end{theorem}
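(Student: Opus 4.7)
The plan is to reuse the family $I_n=\SEQ{1,2,\ldots,k,k+1,k,k-1,\ldots,2}^n$ from the preceding section and to apply it inside $G$. First I would fix an induced path of length $k+1$ in $G$ and relabel its vertices in order as $1,2,\ldots,k+1$; then $I_n$ respects $G$. The one delicate point in this setup is insisting that the chosen path be induced: this guarantees that every reordering of $I_n$, which uses only the pages $\{1,\ldots,k+1\}$, is automatically a walk in $P_{k+1}$, so that the \FIFO-side bound of Lemma~\ref{fifo_thm} transfers verbatim to $G$ rather than being inflated by chord edges.

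Once the setup is in place, the proof is a direct combination of two earlier lemmas. On one side, Lemma~\ref{fwf-upper} gives $\FWF(I_n)=2kn$, and since $\FWF_W^G(I_n)$ is a maximum over reorderings respecting $G$, $\FWF_W^G(I_n)\geq 2kn$. On the other side, as noted above, every reordering of $I_n$ respecting $G$ is a walk on $P_{k+1}$, so $\FIFO_W^G(I_n)=\FIFO_W^{P_{k+1}}(I_n)=(k+1)n$ by Lemma~\ref{fifo_thm}.

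To conclude, for every $c<\frac{2k}{k+1}$ we have
\[
 \FWF_W^G(I_n)-c\,\FIFO_W^G(I_n)\;\geq\;2kn-c(k+1)n\;=\;n\bigl(2k-c(k+1)\bigr),
\]
which is a positive multiple of $n$ and therefore eventually exceeds any fixed additive constant $b$. By the definition of $c_u(\FWF,\FIFO)^G$, this forces $c_u(\FWF,\FIFO)^G\geq\frac{2k}{k+1}$; together with Corollary~\ref{fwf-fifo-worst} (which, applied inside $G$ via the same conservativity argument, gives $c_l(\FWF,\FIFO)^G\geq 1$ and so makes the ratio defined via $c_u$), we conclude $\RWOR{\FWF}{\FIFO}^G\geq\frac{2k}{k+1}$. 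No step in this argument is expected to be a real obstacle; the whole plan is essentially a repackaging of the existing lemmas, and the only item that warrants care is the induced-path reduction mentioned at the start.
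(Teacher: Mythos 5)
Your overall architecture is the paper's: the same family $I_n$, Lemma~\ref{fwf-upper} for the $2kn$ lower bound on \FWF, Lemma~\ref{fifo_thm} for the $(k+1)n$ bound on \FIFO, and Corollary~\ref{fwf-fifo-worst} to guarantee $c_l\geq 1$ so that the ratio is given by $c_u$; the closing limit computation is also fine. The gap is in your very first step: you ``fix an induced path of length $k+1$ in $G$,'' but the hypothesis only supplies a path (a walk with distinct vertices) on $k+1$ vertices, and a graph with such a path need not contain an \emph{induced} one of that length. The complete graph $K_{k+1}$ is the extreme case: it has a Hamiltonian path on $k+1$ vertices while its longest induced path has only two, and there is no general way to extract a long chordless path from a long path. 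So as written your argument proves the theorem only under a strictly stronger hypothesis and says nothing about $K_{k+1}$ or any $G$ in which the guaranteed path carries chords --- precisely the cases the paper's surrounding remark (citing the complete-graph result of~\cite{BFL07j}) intends the bound to cover.

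You did put your finger on the right subtlety --- the issue is bounding $\FIFO_W^G(I_n)$ from above when chords admit reorderings outside $P_{k+1}$ --- but the repair is not to legislate the chords away. One can instead bound \FIFO on \emph{every} reordering of the multiset of $I_n$, independently of the graph: between two consecutive \FIFO faults on a page $p$, the $k$ intervening faults are on $k$ distinct pages other than $p$, hence on \emph{all} other pages (there are only $k+1$ in total), so in particular on pages $1$ and $k+1$. Since each of $1$ and $k+1$ occurs only $n$ times in $I_n$, no page is faulted on more than $n+1$ times, giving $\FIFO_W^G(I_n)\leq (k+1)n+k$; the additive $k$ is absorbed into $b$ and the asymptotic ratio $\frac{2k}{k+1}$ survives. (For what it is worth, the paper's own proof is a one-line citation of Lemma~\ref{fifo_thm}, which is stated only for the access graph $P_{k+1}$, and so glosses over the same point; you identified where the argument is delicate but resolved it with an existence claim that is false in general.)
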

\begin{proof}
Follows from Lemma~\ref{fifo_thm}, Corollary~\ref{fwf-fifo-worst}, and Lemma~\ref{fwf-upper}.
\end{proof}

The relative worst order ratio of \FWF to \LRU on paths is exactly $k$.

\begin{theorem}\label{fwf-lru-thm}
For any access graph $G$ which has a path of length at least $k+1$,$\RWOR{\FWF}{\LRU}^G = k$.
\end{theorem}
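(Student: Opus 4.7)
The plan is to establish matching lower and upper bounds of $k$ on $\RWOR{\FWF}{\LRU}^G$, in the spirit of the proof of Theorem~\ref{fifo-lru-thm}.

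For the lower bound $c_l(\FWF,\LRU)\geq k$, I would use the family $\{I_n\}$ from Lemma~\ref{fwf-upper}. Since $G$ contains a path of length at least $k+1$, the natural ordering of $I_n$ is a walk on that subpath and thus respects $G$, so Lemma~\ref{fwf-upper} yields $\FWF_W^G(I_n)\geq \FWF(I_n)=2kn$. For the denominator, Lemma~\ref{lru_thm} asserts $\LRU_W^{P_N}(I_n)=2(n-1)+k+1$, and that argument localizes to the $P_{k+1}$ subgraph along which $I_n$ lives. The ratio $2kn/(2n+k-1)$ tends to $k$ as $n\to\infty$, and an additive constant $b$ (with $b \leq k-k^2$) makes the inequality $\FWF_W^G(I_n) \geq k\,\LRU_W^G(I_n) + b$ hold for all $n$.

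For the upper bound $c_u(\FWF,\LRU)\leq k$, I would employ the $k$-phase partition from the proof of Lemma~\ref{fwf-fifo}. For any sequence $\sigma$ respecting $G$, denote by $p(\sigma)$ the number of phases. Two per-sequence inequalities drive the argument. First, $\FWF(\sigma)\leq k\,p(\sigma)$: within each phase $\FWF$ flushes at most once and then incurs at most $k$ faults, one per distinct page. Second, $\LRU(\sigma)\geq p(\sigma)-1$: at the start of each phase after the first, the request is to a $(k+1)$st distinct page counted from the start of the previous phase, and hence is absent from $\LRU$'s cache (which contains the $k$ distinct pages of the preceding phase). Combining gives $\FWF(\sigma)\leq k\,\LRU(\sigma)+k$ for every such $\sigma$. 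Specializing to any $\sigma^{*}\in\WORST{I}{G}{\FWF}$ and using $\LRU(\sigma^{*})\leq \LRU_W^G(I)$, I obtain $\FWF_W^G(I)\leq k\,\LRU_W^G(I)+k$, i.e., $c_u(\FWF,\LRU)\leq k$ with additive constant $b=k$.

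The main obstacle will be the per-sequence bound $\LRU(\sigma)\geq p(\sigma)-1$. Establishing it requires checking that every non-terminal phase contains \emph{exactly} $k$ distinct pages (by maximality), so that $\LRU$'s cache at the end of such a phase consists precisely of those $k$ pages; the first request of the next phase is then guaranteed to be a fault. Once this is in place, combining the two bounds and specializing to worst orderings is routine, and the matching bounds give $\RWOR{\FWF}{\LRU}^G=k$.
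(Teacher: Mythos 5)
Your two main ingredients are exactly the paper's: the upper bound via the $k$-phase decomposition (\FWF incurs $k$ faults per phase, \LRU at least one fault per phase, hence $\FWF(\sigma)\leq k\,\LRU(\sigma)+O(1)$ pointwise, then pass to worst orderings), and the lower bound via the family $\SET{I_n}$ using Lemmas~\ref{lru_thm} and~\ref{fwf-upper}. Your worry about the phase argument is unfounded: by maximality every non-terminal phase contains exactly $k$ distinct pages, all requested within the phase, so they are precisely \LRU's cache contents when the next phase begins with a $(k+1)$st distinct page; in fact \LRU also faults on the first request of the first phase, so one even gets $\LRU(\sigma)\geq p(\sigma)$ with no additive constant.

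There is, however, a definitional slip you should repair. The family $\SET{I_n}$ does \emph{not} establish $c_l(\FWF,\LRU)\geq k$: by definition $c_l$ is the supremum of those $c$ for which $\FWF^G_W(I)\geq c\,\LRU^G_W(I)-b$ holds for \emph{all} sequences $I$, whereas your inequality with $b\leq k-k^2$ is verified only for the members of the family. What the family actually shows (since $\LRU^G_W(I_n)\to\infty$ and the ratio tends to $k$) is that no $c<k$ can serve as an upper bound with any additive constant, i.e., $c_u(\FWF,\LRU)\geq k$. This matters because $\RWOR{\FWF}{\LRU}^G$ is only defined to equal $c_u$ once $c_l(\FWF,\LRU)\geq 1$ has been established, and neither of your two bounds supplies that: you still need the separate fact that $\LRU_W(I)\leq\FWF_W(I)$ for all $I$, which the paper gets from Corollary~\ref{fwf-fifo-worst} (since \LRU is conservative, Lemma~\ref{fwf-fifo} gives $\LRU(\sigma)\leq\FWF(\sigma)$ pointwise). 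With that citation added and the lower bound relabelled as a bound on $c_u$, your argument coincides with the paper's proof.
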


\begin{proof}
By Corollary~\ref{fwf-fifo-worst}, for any sequence $I$,
$\LRU_W(I) \leq \FWF_W(I)$.

We now argue that for any request sequence $I$, $\FWF(I) \leq k \cdot \LRU(I)$.
We decompose the sequence into $k$-phases as described in the
proof of Lemma~\ref{fwf-fifo}. As argued there, \FWF will flush at the
beginning of every phase and therefore must 
incur $k$ faults in each phase. \LRU faults on the first request
of each phase since the $k$ distinct pages from the previous phase
have been requested more recently.
Thus, if \FWF incurs $kx$ faults, then \LRU will incur at least $x$,
and so $\FWF(I) \leq k \cdot \LRU(I)$,
implying that $\RWOR{\FWF}{\LRU} \leq k$.

From the sequence family $\SET{I_n}$ and
Lemmas~\ref{lru_thm} and~\ref{fwf-upper},
we obtain that $\RWOR{\FWF}{\LRU}^G \geq k$. Hence,
$\RWOR{\FWF}{\LRU}^G = k$.  %
\end{proof}

\section{Incomparability}

In this section, we show that on some general classes of access graphs,
\LRU and \FIFO are incomparable.

We consider the cyclic access graph defined by the edge set
\[\SET{ (1,2), (2,3), (3,4), (4,5), (5,1) }\]
using  a cache of size $4$ to process the
request sequence 
\[I_1 =  \SEQ{1,5,1,2,3,4,5,1,2,1}.\]

\begin{lemma}
\label{not-worst}
For cache size 4, on any reordering of $I_1$ respecting the cycle and starting with $1$,
\LRU incurs at least $8$ faults and
\FIFO incurs at most $7$ faults.
\end{lemma}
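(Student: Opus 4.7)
The lemma packages two separate bounds: a lower bound on LRU's worst fault count and an upper bound on FIFO's worst fault count, over reorderings of $I_1$ respecting $C_5$ and starting at~$1$. I would prove each separately.

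For the LRU bound, it suffices to exhibit a single reordering on which LRU faults at least~$8$ times. Taking $I_1$ itself and simulating LRU step by step with cache size $k=4$, the only hits occur at position~$3$ (page~$1$, last used at position~$1$, with only page~$5$ in between) and position~$10$ (page~$1$, last used at position~$8$, with only page~$2$ in between). All $8$ other positions fault, so $I_1$ realises the $8$-fault bound.

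For the FIFO bound, I would show that every reordering $J$ respecting $C_5$ and starting at~$1$ satisfies $\FIFO(J)\le 7$. Decompose $J$ into $k$-phases with $k=4$ exactly as in the proof of Lemma~\ref{fwf-fifo}. Because \FIFO is conservative, each phase contributes at most~$4$ faults, giving a trivial ceiling of~$8$; the task is to save one fault in every case by case-splitting on the distinct-page set $S_1$ of phase~$1$ (necessarily containing~$1$). If $|S_1|<4$, phase~$1$ has at most~$3$ faults and we are done. If $|S_1|=4$ and $\{3,4\}\subseteq S_1$ (forcing $S_1$ to be one of the two Hamiltonian $4$-paths from~$1$, $\{1,2,3,4\}$ or $\{1,3,4,5\}$), the singletons~$3$ and~$4$ are exhausted by phase~$1$, so phase~$2$'s distinct set lies in $\{1,2,5\}$, capping phase-$2$ faults at~$3$ and the total at~$7$.

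The remaining case has $|S_1|=4$ with exactly one of $\{3,4\}$ in $S_1$; WLOG say $3\in S_1$, $4\notin S_1$. Then the singleton visit to~$3$ must be reached in phase~$1$ via its other neighbour~$2$ (since~$4$ is excluded), forcing a detour $\langle 2,3,2\rangle$ that exhausts both copies of~$2$ and produces phase-$1$ hits; symmetrically for the case $4\in S_1$, $3\notin S_1$ with a $\langle 5,4,5\rangle$ detour. The main technical obstacle is this third case: one must enumerate the few valid phase-$1$ walks (the cycle geometry and the strict counts $\{2^2, 5^2\}$ pin them down to a small list) and verify that the FIFO queue entering phase~$2$ always contains a page that survives all of phase~$2$, so that phase~$2$ incurs at most~$3$ faults. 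The enumeration is manageable because any walk that cannot be completed to a full length-$10$ reordering respecting the multiset and cycle is discarded, and the cycle's rigid adjacency structure leaves only a handful of live continuations.
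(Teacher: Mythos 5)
Your \LRU half is correct and matches the paper: both you and the paper simply run \LRU on $I_1$ itself and count $8$ faults, and since the lemma is only ever used as an existence statement (a single reordering realizing $8$ faults), this suffices. The \FIFO half is where the problem lies. Your $k$-phase decomposition is a genuinely different route from the paper's, which instead case-splits on the possible prefixes $\SEQ{1,i,1}$ and $\SEQ{1,i,j}$ of the reordering and repeatedly applies conservativeness to suffixes with $4$ distinct pages. But in your third case --- the one you yourself flag as the main obstacle --- the structural claim you use to set it up is false. You assert that when $S_1=\SET{5,1,2,3}$ the singleton $3$ forces a detour $\SEQ{2,3,2}$ inside phase~$1$, exhausting both copies of $2$ and capping phase~$2$ at $3$ distinct pages. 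However, phase~$1$ can simply end at $3$, with phase~$2$ beginning at the first request to $4$ (which is adjacent to $3$). The sequence $I_1=\SEQ{1,5,1,2,3,4,5,1,2,1}$ itself is a counterexample: its first phase is $\SEQ{1,5,1,2,3}$, using only one copy of $2$, and its second phase $\SEQ{4,5,1,2,1}$ contains the four distinct pages $1,2,4,5$. So the intended conclusion ``phase~$2$ has at most $3$ distinct pages, hence at most $3$ faults'' does not apply, and the bare phase bound gives only $4+4=8$.

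Your fallback --- enumerate the admissible phase-$1$ walks and verify that the \FIFO queue entering phase~$2$ always contains a page that is hit during phase~$2$ --- is the right kind of statement (on $I_1$ itself, the phase-$2$ requests to $5$ and $2$ are both hits and \FIFO totals $6$ faults), but it is announced rather than proved, and in this case it is precisely the content of the lemma. The enumeration must cover walks that leave $S_1=\SET{5,1,2,3}$ through $3$ as well as those that leave through $5$, together with all admissible continuations on the remaining multiset $\SET{1^4,2^2,5^2,3,4}$; until that is carried out, $\FIFO\leq 7$ is not established. Two smaller points: your case $|S_1|<4$ is vacuous (the sequence has $5$ distinct pages, so the maximal first phase always has exactly $4$), which is harmless; and your cap of $j$ faults on a phase with $j<k$ distinct pages uses a stronger property than the paper's literal definition of conservative (which only yields $\leq k$) --- it is true for \FIFO, since a repeated fault on a page within a window forces $k+1$ distinct pages into that window, but it should be stated and justified.
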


\begin{proof}
It is trivial to check that \LRU incurs $8$ faults
on the ordering given by $I_1$.

For \FIFO,
it is easy to check in the following that reorderings with repeated
requests do not lead to more faults by \FIFO.
The reorderings of $I_1$ either have a prefix of the type
$\BIGSETOF{ \SEQ{1,i,1} }{ i \in \SET{2,5} }$ or
$\BIGSETOF{ \SEQ{1, i, j} }{ i \ne j \ne 1 }$. 
For the latter, examples being $\SEQ{1, 2, 3}$ and $\SEQ{1, 5, 4}$,
the subsequence following the prefix contains $4$ distinct pages.
Since \FIFO is conservative, it can incur at most $4$ faults on that
part after the prefix, bringing the total fault count up to at most $7$.

The first four distinct page requests will always incur $4$ faults,
but for reorderings with the prefix
$\BIGSETOF{ \SEQ{1,i,1} }{ i \in \SET{2,5} }$,
some pages are repeated within the first four requests.
If the extended prefix is $\SEQ{1,i,1,i}$ for $ i \in \SET{2, 5}$,
then the rest of the sequence still contains $4$ distinct pages
and again can add at most $4$ faults to the previous $2$,
bringing the total up to at most $6$.
The only remaining case is a prefix of the form
$\SEQ{ 1,i, 1, j}$ where $i, j  \in \SET{2,5}$, $i \ne j$.
Here, there are $3$ faults on the prefix.
We divide the analysis of the rest of the sequence up into two cases
depending on the next request following $j$:

For the first case, if the next request is $1$,
the extended prefix is $\SEQ{ 1,i,1, j,1}$. 
However, then the next request to a page other than $1$
is either to $i$ or $j$ and therefore not a fault.
In addition, either there are no more $i$'s or no more $j$'s in the
remaining part of the sequence, and again \FIFO can then fault at
most $4$ times on this sequence with only $4$ distinct pages.

For the second case,
if the next request is $k \in \SET{3,4}$,
then visiting $l \in \SETOF{4,5}{l \ne k}$ before the next $j$
will give a prefix $\SEQ{ 1, i, 1, j, k , l }$ with 
$5$ faults, and the suffix must be $\SEQ{ i, 1, j, 1}$ or
$\SEQ{ i, 1 , 1, j }$, adding only one more fault. This gives
$6$ faults in total.
If $j$ is requested before $l$, the only possibilities are
$\SEQ{ 1,2,1,5,4,5,1,1,2,3 }$ and $\SEQ{ 1,5,1,2,3,2,1,1,5,4 }$.
In total, this gives only $5$ faults.  %
\end{proof}

Note that the result above does not contradict our result about cycles.
As predicted by that result, one of the worst orderings for
\LRU and \FIFO would be $\SEQ{ 2,1,5,4,3,2,1,5,1,1 }$,
incurring 8 faults for both algorithms.

Using the cycle graph on which we processed $I_1$,
we now construct a larger graph using ``copies'' of this graph as follows.
For $2 \leq i \leq n$, we define $I_i$ as a structural copy of $I_1$,
i.e, we use new page names, but with the same relative order as in $I_1$
(like putting a ``dash'' on all pages in $I_1$).
All these copies have their own set of pages
such that no request in $I_i$ appears in $I_j$ for $i \ne j$.
Just as $I_1$ implies a cycle graph that we denote $X_1$,
so do each of these sequences and we let
$X_i$ denote the graph implied by $I_i$.
Let $X_{i,k}$ denote the $k$th vertex in the $i$th copy
and $I_{j,k}$ denote the $k$th request in the $j$th copy.
To be precise, we define
$I_i =  \SEQ{ X_{i,1} , X_{i, 5}, X_{i, 1}, X_{i, 2} ,X_{i,3},X_{i,4}, 
                X_{i,5}, X_{i,1} ,X_{i,2},X_{i,1} }$.

We define a graph $\G_n$ with a vertex set containing all $X_{i,k}$ and
$n$ additional vertices $u_1, u_2, \ldots, u_n$.
Its edges are all the edges from the graphs $X_i$, $1\leq i\leq n$,
together with edges $(X_{i,1}, u_i)$ and $(u_i,X_{i+1,1})$
for all $i$, $1 \leq i \leq n-1$, plus the edge $(X_{n,1},u_n)$.

Thus, $\G_n$ can be described as a chain of cycles, where each two
neighboring cycles are separated by a single vertex. Clearly, the sequence $\I_n =  \SEQ{ I_1, u_1, I_2, u_2, I_3, u_3, \ldots, I_n, u_n } $
respects the access graph $\G_n$.

\begin{theorem}
\LRU and \FIFO are incomparable on the 
family of graphs $\SET{\G_n}$, 
according to relative worst order analysis. 
\end{theorem}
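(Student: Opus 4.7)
My plan is to rule out both $c_l(\LRU,\FIFO)\ge 1$ and $c_u(\LRU,\FIFO)\le 1$ by exhibiting two distinct families of sequences on $\{\G_n\}$, one witnessing each failure. For the first, I would note that each $\G_n$ contains a path of length at least $k+1 = 5$, for instance any five consecutive vertices of a single cycle $X_i$, and embed along it the separation family from Theorem~\ref{paths-better}; the resulting sequences respect $\G_n$ and satisfy $\FIFO_W^{\G_n}\ge\frac{k+1}{2}\LRU_W^{\G_n}+b$ with $\LRU_W\to\infty$, which immediately forbids any bound $\LRU_W\ge\FIFO_W-b$ and so gives $c_l(\LRU,\FIFO)<1$.

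For the opposite direction I would use $\{\I_n\}$ itself and aim to show that $\LRU_W^{\G_n}(\I_n)-\FIFO_W^{\G_n}(\I_n)$ grows unboundedly with $n$. For the lower bound, observe that in the natural ordering of $\I_n$ each block $I_i$ necessarily begins at $X_{i,1}$, since that is the only vertex of $X_i$ adjacent to the bridges $u_{i-1}$ and $u_i$; Lemma~\ref{not-worst} (together with the observation that the cache contains no $X_i$ pages when $I_i$ is entered) then forces \LRU\ to incur at least $8$ faults on each $I_i$, and each $u_i$ is a fresh page and hence a fault, giving $\LRU_W^{\G_n}(\I_n)\ge 9n$.

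The matching upper bound $\FIFO_W^{\G_n}(\I_n)\le 8n+1$ is the delicate step. Given an arbitrary reordering $J$ of $\I_n$ respecting $\G_n$, I would perform an exchange argument in the style of Section~\ref{all-graphs} to show that $J$ can be assumed to process each $I_i$ contiguously without reducing \FIFO's faults. Once that is established, every block after the first begins at $X_{i,1}$ (since that is the only point of re-entry to $X_i$), so by Lemma~\ref{not-worst} \FIFO\ incurs at most $7$ faults on each such block; the first block contributes at most $8$ by the note following the lemma, and the $n$ bridges contribute at most $n$ more, totalling $8n+1$. Combining with the lower bound gives $\LRU_W-\FIFO_W\ge n-1$, which is unbounded, so $c_u(\LRU,\FIFO)>1$. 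The hardest part will be making the exchange argument precise: since entry to cycle $X_i$ is forced through $X_{i,1}$, a split $I_i$ revisits $X_{i,1}$ at each re-entry, and I would need to verify case-by-case that consolidating these visits into a single contiguous run does not cost \FIFO\ any faults.
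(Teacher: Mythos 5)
Your overall strategy coincides with the paper's: use $\I_n$ together with Lemma~\ref{not-worst} to rule out comparability in \LRU's favor, and a path-based family fed into Theorem~\ref{paths-better} to rule out comparability in \FIFO's favor. But two of your steps are problematic as stated. The step you single out as the ``delicate'' and ``hardest'' part --- an exchange argument showing that any reordering of $\I_n$ may be assumed to process each $I_i$ contiguously --- is unnecessary, and the situation you worry about (``a split $I_i$ revisits $X_{i,1}$ at each re-entry'') cannot occur. Each $u_i$ appears exactly once in $\I_n$, and any walk passing between $X_i$ and $X_{i+1}$ must request $u_i$; so a permutation of $\I_n$ respecting $\G_n$ can cross each bridge exactly once, which already forces the form $\SEQ{I_1',u_1,I_2',u_2,\ldots,I_n',u_n}$ (or its reverse), with every block after the first beginning at $X_{i,1}$ and every block ending there. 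As written, your central step is an unexecuted plan for a case analysis that has no cases; it should be replaced by this one-line structural observation, after which Lemma~\ref{not-worst} applies block by block exactly as you intend.

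The second problem is your instantiation of the path for the other direction: ``any five consecutive vertices of a single cycle $X_i$'' is \emph{all} of $X_i$, whose induced subgraph is $C_5$, not $P_5$. Theorem~\ref{paths-better} does not transfer to a non-induced path: on a cycle of $k+1$ vertices a reordering can send \LRU\ all the way around repeatedly, Theorem~\ref{lru-opt} fails, and $\LRU_W^{\G_n}$ of your sequences can greatly exceed its value on $P_5$, destroying the $\frac{k+1}{2}$ separation. You need an induced path on $k+1=5$ vertices, which forces you to include a bridge vertex, e.g.\ $X_{1,4},X_{1,3},X_{1,2},X_{1,1},u_1$, as in the paper's family $J_r$. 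Finally, a small logical point: an unbounded additive gap $\LRU_W-\FIFO_W\ge n-1$ does not by itself contradict $c_u(\LRU,\FIFO)\le 1$; you need the multiplicative statement, which your own counts do supply, since $\LRU_W^{\G_n}(\I_n)\ge 9n$ and $\FIFO_W^{\G_n}(\I_n)\le 8n+1$ give a ratio tending to $\frac{9}{8}>1$ while $\FIFO_W^{\G_n}(\I_n)\rightarrow\infty$.
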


\begin{proof}
We use cache size $k=4$.
For the infinite family of sequences $\SET{\I_n}$
respecting the access graph $\G_n$,
the following two conditions hold:
\begin{itemize}
\item
  $\lim_{n\rightarrow\infty}\FIFO(\I_n)=\infty$.
\item
for all $\I_n$,
$\LRU^{\G_n}_W(\I_n) \geq \frac{9}{8} \cdot \FIFO_W^{\G_n}(\I_n)$.
\end{itemize}
The first condition obviously holds since entirely new pages are requested
as the sequences get longer.
With regards to the second condition,
since the requests $u_i$, $i \geq 1$, appear only once,
any permutation respecting $\G_n$ must have the following structure
or its reverse:
\[ \I_n'=  \SEQ{ I_1', u_1, I_2', u_2, I_3', u_3, \ldots, I_n', u_n } \]
where the sequence $I'_i$ is a reordering of $I_i$.
Note that for the reordering to respect the access graph,
each permutation $I_i'$ must begin and end with $X_{i,1}$.

By Lemma~\ref{not-worst},
none of the reorderings that start and end with $1$ give rise to more
than $7$ faults for \FIFO, while there is a reordering (the one given)
on which \LRU incurs $8$ faults.
Taking the vertices $u_i$ into account as well,
\FIFO incurs at most $8n$ faults and \LRU at least $9n$ faults
on any permutation of $\I_n$.
This proves that \LRU and \FIFO cannot be comparable in \LRU's favor.

On the other hand, consider the family of sequences 
\[ J_r = \SEQ{ X_{1,4}, X_{1,3}, X_{1,2}, X_{1,1}, u_1, X_{1,1},X_{1,2},X_{1,3},}^r. \]
The sequence constitutes a path on parts of $\G_n$.
There are $k+1$ pages in each repetition, so \LRU must fault
at least once per repetition.
Thus,
\[\lim_{r\rightarrow\infty}\LRU(J_r)=\infty.\]
By Theorem~\ref{paths-better} for $k=4$,
\[\FIFO^{\G_n}_W(J_r)\geq \left(\frac{k+1}{2}\right) \LRU^{\G_n}_W(J_r)-(k-1)
= \frac{5}{2}\LRU^{\G_n}_W(J_r)-3.\]
Thus, \LRU and \FIFO cannot be comparable in \FIFO's favor.

In conclusion, \LRU and \FIFO are incomparable.
\end{proof}

\begin{figure}
\begin{center}
\begin{picture}(308,64)
\put(28.531695,0.000000){\circle*{5}}
\path(28.531695,0.000000)(0.000000,20.729490)
\put(0.000000,20.729490){\circle*{5}}
\path(0.000000,20.729490)(10.898138,54.270510)
\put(10.898138,54.270510){\circle*{5}}
\path(10.898138,54.270510)(46.165253,54.270510)
\put(46.165253,54.270510){\circle*{5}}
\path(46.165253,54.270510)(57.063391,20.729490)
\put(57.063391,20.729490){\circle*{5}}
\path(57.063391,20.729490)(28.531695,0.000000)
\put(28.531695,0.000000){\circle*{5}}
\put(28.531695,11.000000){\makebox(0,0){\tiny $X_{1,1}$}}
\put(10.461622,24.128677){\makebox(0,0){\tiny $X_{1,2}$}}
\put(17.363776,45.371323){\makebox(0,0){\tiny $X_{1,3}$}}
\put(39.699615,45.371323){\makebox(0,0){\tiny $X_{1,4}$}}
\put(46.601769,24.128677){\makebox(0,0){\tiny $X_{1,5}$}}
\path(28.531695,0.000000)(68.531695,0.000000)
\put(68.531695,0.000000){\circle*{5}}
\put(68.531695,6.000000){\makebox(0,0){\tiny $u_1$}}
\path(68.531695,0.000000)(108.531695,0.000000)
\put(108.531695,0.000000){\circle*{5}}
\path(108.531695,0.000000)(80.000000,20.729490)
\put(80.000000,20.729490){\circle*{5}}
\path(80.000000,20.729490)(90.898138,54.270510)
\put(90.898138,54.270510){\circle*{5}}
\path(90.898138,54.270510)(126.165253,54.270510)
\put(126.165253,54.270510){\circle*{5}}
\path(126.165253,54.270510)(137.063391,20.729490)
\put(137.063391,20.729490){\circle*{5}}
\path(137.063391,20.729490)(108.531695,0.000000)
\put(108.531695,0.000000){\circle*{5}}
\put(108.531695,11.000000){\makebox(0,0){\tiny $X_{2,1}$}}
\put(90.461622,24.128677){\makebox(0,0){\tiny $X_{2,2}$}}
\put(97.363776,45.371323){\makebox(0,0){\tiny $X_{2,3}$}}
\put(119.699615,45.371323){\makebox(0,0){\tiny $X_{2,4}$}}
\put(126.601769,24.128677){\makebox(0,0){\tiny $X_{2,5}$}}
\path(108.531695,0.000000)(148.531695,0.000000)
\put(148.531695,0.000000){\circle*{5}}
\put(148.531695,6.000000){\makebox(0,0){\tiny $u_2$}}
\path(148.531695,0.000000)(188.531695,0.000000)
\put(188.531695,0.000000){\circle*{5}}
\path(188.531695,0.000000)(160.000000,20.729490)
\put(160.000000,20.729490){\circle*{5}}
\path(160.000000,20.729490)(170.898138,54.270510)
\put(170.898138,54.270510){\circle*{5}}
\path(170.898138,54.270510)(206.165253,54.270510)
\put(206.165253,54.270510){\circle*{5}}
\path(206.165253,54.270510)(217.063391,20.729490)
\put(217.063391,20.729490){\circle*{5}}
\path(217.063391,20.729490)(188.531695,0.000000)
\put(188.531695,0.000000){\circle*{5}}
\put(188.531695,11.000000){\makebox(0,0){\tiny $X_{3,1}$}}
\put(170.461622,24.128677){\makebox(0,0){\tiny $X_{3,2}$}}
\put(177.363776,45.371323){\makebox(0,0){\tiny $X_{3,3}$}}
\put(199.699615,45.371323){\makebox(0,0){\tiny $X_{3,4}$}}
\put(206.601769,24.128677){\makebox(0,0){\tiny $X_{3,5}$}}
\path(188.531695,0.000000)(228.531695,0.000000)
\put(228.531695,0.000000){\circle*{5}}
\put(228.531695,6.000000){\makebox(0,0){\tiny $u_3$}}
\path(228.531695,0.000000)(268.531695,0.000000)
\put(268.531695,0.000000){\circle*{5}}
\path(268.531695,0.000000)(240.000000,20.729490)
\put(240.000000,20.729490){\circle*{5}}
\path(240.000000,20.729490)(250.898138,54.270510)
\put(250.898138,54.270510){\circle*{5}}
\path(250.898138,54.270510)(286.165253,54.270510)
\put(286.165253,54.270510){\circle*{5}}
\path(286.165253,54.270510)(297.063391,20.729490)
\put(297.063391,20.729490){\circle*{5}}
\path(297.063391,20.729490)(268.531695,0.000000)
\put(268.531695,0.000000){\circle*{5}}
\put(268.531695,11.000000){\makebox(0,0){\tiny $X_{4,1}$}}
\put(250.461622,24.128677){\makebox(0,0){\tiny $X_{4,2}$}}
\put(257.363776,45.371323){\makebox(0,0){\tiny $X_{4,3}$}}
\put(279.699615,45.371323){\makebox(0,0){\tiny $X_{4,4}$}}
\put(286.601769,24.128677){\makebox(0,0){\tiny $X_{4,5}$}}
\path(268.531695,0.000000)(308.531695,0.000000)
\put(308.531695,0.000000){\circle*{5}}
\put(308.531695,6.000000){\makebox(0,0){\tiny $u_4$}}
\end{picture}
\end{center}
\caption{The graph $\G_n$ for $n=4$.}
\label{figure-string-of-cycles}
\end{figure}
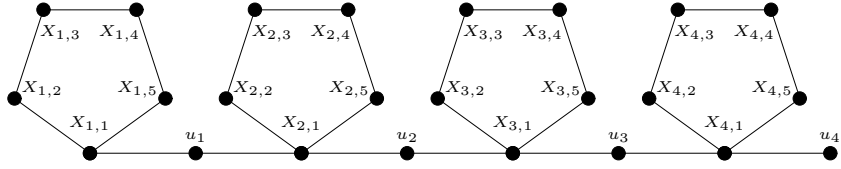

\section{Open problems}
We have determined that according to relative worst order analysis,
\LRU is better than \FIFO on paths and cycles. On some classes
of general access graphs, the two algorithms are incomparable.
It would be interesting to get closer to determining exact
access graphs classes characterizing relationships between the two algorithms.
 We believe that the results for paths and cycles
will form fundamental building blocks in an attack on this problem.
The most obvious class of access graphs to study next is trees.
\LRU  can clearly do better than \FIFO on any tree containing a path of length
$k+1$.
We conjecture that \LRU does at least as well as \FIFO on any tree.
One difficulty in establishing a proof of this is that
for trees, as opposed to the cases of paths and cycles, there exist
worst order sequences for \LRU for which \FIFO performs better than \LRU.

For general access graphs, when showing that \FIFO can do better
than \LRU, we used a family of access graphs, the size of which
grew with the length of the input sequence. It would be interesting
to know if this is necessary, or if such a separation result
can be established on a single access graph of bounded size.

\bibliographystyle{plain}
\bibliography{refs}

\begin{thebibliography}{10}

\bibitem{AvSW95}
S.~Albers, B.~{von Stengel}, and R.~Werchner.
\newblock A combined {BIT} and {TIMESTAMP} algorithm for the list update
  problem.
\newblock {\em Information Processing Letters}, 56:135--139, 1995.

\bibitem{AW98}
S.~Albers and J.~Westbrook.
\newblock Self-organizing data structures.
\newblock In A.~Fiat and G.~J. Woeginger, editors, {\em Online Algorithms ---
  The State of the Art}, volume 1442 of {\em Lecture Notes in Computer
  Science}, pages 13--51. Springer, 1998.

\bibitem{Bel66}
L.~A. Belady.
\newblock A study of replacement algorithms for virtual-storage computer.
\newblock {\em IBM Systems Journal}, 5(2):78--101, 1966.

\bibitem{BM85}
J.~L. Bentley and C.~C. McGeoch.
\newblock Amortized analyses of self-organizing sequential search heuristics.
\newblock {\em Communications of the ACM}, 28:404--411, 1985.

\bibitem{BE97b}
A.~Borodin and R.~El-Yaniv.
\newblock {\em Online Computation and Competitive Analysis}.
\newblock Cambridge University Press, 1998.

\bibitem{BIRS95}
A.~Borodin, S.~Irani, P.~Raghavan, and B.~Schieber.
\newblock Competitive paging with locality of reference.
\newblock {\em Journal of Computer and System Sciences}, 50(2):244--258, 1995.

\bibitem{BF07}
J.~Boyar and L.~M. Favrholdt.
\newblock The relative worst order ratio for on-line algorithms.
\newblock {\em ACM Transactions on Algorithms}, 3(2), 2007.
\newblock Article No.~22.

\bibitem{BFL07j}
J.~Boyar, L.~M. Favrholdt, and K.~S. Larsen.
\newblock The relative worst order ratio applied to paging.
\newblock {\em Journal of Computer and System Sciences}, 73(5):818--843, 2007.

\bibitem{CN99}
M.~Chrobak and J.~Noga.
\newblock {LRU} is better than {FIFO}.
\newblock {\em Algorithmica}, 23(2):180--185, 1999.

\bibitem{D68}
P.~J. Denning.
\newblock The working set model for program behaviour.
\newblock {\em Communications of the ACM}, 11(5):323--333, 1968.

\bibitem{D80}
P.~J. Denning.
\newblock Working sets past and present.
\newblock {\em IEEE Transactions on Software Engineering}, 6(1):64--84, 1980.

\bibitem{DLO05j}
R.~Dorrigiv and A.~L{\'{o}}pez-Ortiz.
\newblock A survey of performance measures for on-line algorithms.
\newblock {\em SIGACT News}, 36(3):67--81, 2005.

\bibitem{EKL11p}
M.~R. Ehmsen, J.~S. Kohrt, and K.~S. Larsen.
\newblock List factoring and relative worst order analysis.
\newblock In K.~Jansen and R.~Solis-Oba, editors, {\em Eighth Workshop on
  Approximation and Online Algorithms}, volume 6534 of {\em Lecture Notes in
  Computer Science}, pages 118--129. Springer, 2011.

\bibitem{FK95}
A.~Fiat and A.~R. Karlin.
\newblock Randomized and multipointer paging with locality of reference.
\newblock In {\em Twenty-Seventh Annual ACM Symposium on Theory of Computing},
  pages 626--634, 1995.

\bibitem{FM97}
A.~Fiat and M.~Mendel.
\newblock Truly online paging with locality of reference.
\newblock In {\em Thirty-Eighth Annual Symposium on Foundations of Computer
  Science}, pages 326--335, 1997.
\newblock Extended version: CoRR, abs/cs/0601127, 2006.

\bibitem{FR97}
A.~Fiat and Z.~Rosen.
\newblock Experimental studies of access graph based heuristics: Beating the
  {\LRU} standard?
\newblock In {\em Eighth Annual ACM-SIAM Symposium on Discrete Algorithms},
  pages 63--72, 1997.

\bibitem{G66j}
R.~L. Graham.
\newblock Bounds for certain multiprocessing anomalies.
\newblock {\em Bell Systems Tech.\ Journal}, 45(9):1563--1581, 1966.

\bibitem{IKP96j}
S.~Irani, A.~R. Karlin, and S.~Phillips.
\newblock Strongly competitive algorithms for paging with locality of
  reference.
\newblock {\em SIAM Journal on Computing}, 25(3):477--497, 1996.

\bibitem{KP00}
B.~Kalyanasundaram and K.~Pruhs.
\newblock Speed is as powerful as clairvoyance.
\newblock {\em Journal of the ACM}, 47(4):617--643, 2000.

\bibitem{KMRS88j}
A.~R. Karlin, M.~S. Manasse, L.~Rudolph, and D.~D. Sleator.
\newblock Competitive snoopy caching.
\newblock {\em Algorithmica}, 3:79--119, 1988.

\bibitem{KPR00j}
A.~R. Karlin, S.~J. Phillips, and P.~Raghavan.
\newblock Markov paging.
\newblock {\em SIAM Journal on Computing}, 30(3):906--922, 2000.

\bibitem{ST85j}
D.~D. Sleator and R.~E. Tarjan.
\newblock Amortized efficiency of list update and paging rules.
\newblock {\em Communications of the ACM}, 28(2):202--208, 1985.

\end{thebibliography}

\end{document}